\newtheorem{proposition}{Proposition}
\newtheorem{theorem}{Theorem}
\newenvironment{proof}[1][Proof:]{\begin{trivlist}
\item[\hskip \labelsep {\bfseries #1}]}{\end{trivlist}}
\newcommand{\id}{{\bf 1}}
\newcommand{\Spec}{\mbox{Spec}}
\newcommand{\EssSpec}{\mbox{EssSpec}}
\newcommand{\supp}{\mbox{supp}}
\newcommand{\Aut}{\mbox{Aut}}
\newcommand{\Ad}{\mbox{Ad}}
\newcommand{\ad}{\mbox{ad}}
\newcommand{\sgn}{\mbox{sgn}}
\renewenvironment{matrix}{%
  \hskip -\arraycolsep\array{*\MaxMatrixCols c}%
}{%
  \endarray \hskip -\arraycolsep
}
\renewenvironment{pmatrix}{\left(\matrix}{\endmatrix\right)}
\begin{document}

\title[Harmonic analysis in Bianchi I-VII]{Explicit harmonic and spectral analysis in Bianchi I-VII type cosmologies}

\author{Zhirayr Avetisyan$^{1,2}$ and Rainer Verch$^2$}

\address{$^1$ Max Planck Institute for Mathematics in the Sciences, Inselstr. 22,\\
04103 Leipzig, Germany}

\address{$^2$ Institut f\"ur Theoretische Physik, Universit\"at Leipzig, Br\"uderstr. 16,\\
04103 Leipzig, Germany}

\ead{jirayrag@gmail.com}

\begin{abstract}
The solvable Bianchi I-VII groups which arise as homogeneity
groups in cosmological models are analyzed in a uniform manner.
The dual spaces (the equivalence classes of unitary irreducible
representations) of these groups are computed explicitly. It is
shown how parameterizations of the dual spaces can be chosen to
obtain explicit Plancherel formulas. The Laplace operator $\Delta$
arising from an arbitrary left invariant Riemannian metric on the
group is considered, and its spectrum and eigenfunctions are given
explicitly in terms of that metric. The spectral Fourier transform
is given by means of the eigenfunction expansion of $\Delta$. The
adjoint action of the group automorphisms on the dual spaces is
considered. It is shown that Bianchi I-VII type cosmological
spacetimes are well suited for mode decomposition. The example of
the mode decomposed Klein-Gordon field on these spacetimes is
demonstrated as an application.
\end{abstract}

\pacs{04.62.+v,02.20.Qs,02.30.Px,98.80.Jk}

\submitto{\CQG}

\maketitle

\section{Introduction}

While the cosmic microwave background radiation (CMB) has been
observed to be homogeneous and isotropic to a high degree of
accuracy,
data of the COBE and WMAP space missions
showed several interesting patterns (e.g., the
hemispherical asymmetry, the 'axes of evil', the cold spot etc.) that
could be taken as an indication of a slight anisotropy.
This has recently provided motivation to consider spatially homogeneous but
anisotropic cosmological spacetime models.
Because the anisotropy
is relatively small, only spacetime models which are close to
the standard, spatially isotropic Friedmann-Robertson-Walker (FRW)
geometries are compatible with oberservations.
Candidates are cosmological spacetimes with homogeneity groups
of Bianchi types I,V,VII or IX.
Attempts were made to fit such
Bianchi models to data so that the observed patterns are explained
\cite{Jaffe2005}, \cite{Jaffe2006}, \cite{Jaffe2006a}. Further
improvement included a cosmological constant, but all these
models were eventually ruled out \cite{Jaffe2006b},
\cite{Bridges2007}. Another possibility is suggested by the so
called early time Bianchi models (as opposed to late time Bianchi
models mentioned before).
In these models, the
the universe is considered as anisotropic at the very early epoch before and
during inflation, and then goes over into
the spatially isotropic FRW geometry
 \cite{Dechant2009}, \cite{Soda2012},\cite{Pontzen2011}.
The newly acquired data from the Planck mission seem to be compatible with
this scenario \cite{Collaboration2013},
\cite{Collaboration2013a}.

But anisotropic cosmological spacetime models are also of
mathematical interest. Mathematically, it is very natural to
investigate such models with their associated isometry groups from
the perspective of harmonic analysis in homogeneous spaces. Unlike
FRW models, these models are yet to be investigated in more
detail. Another field of interest is mathematical cosmology, which
as yet lacks a thorough study aside from FRW models. Of particular
interest are different scenarios of the early epoch of the
universe including also quantum effects. The adequate theoretical
framework for describing such situations is quantum field theory
in curved spacetimes (QFT in CST; see \cite{Wald1994} as a general
reference; see also \cite{Verch2012}, \cite{Brunetti2005},
\cite{Hollands2010} and references cited therein for more recent
developments). In the context of (anisotropic) cosmological
models, harmonic analysis and Fourier analysis with respect to the
isometry groups of the spacetimes under consideration are powerful
tools for obtaining rigorous and explicit results within the
setting of QFT in CST.

There is a prevalent tendency among different scientific
disciplines to specialize towards particular aims and perspectives and
thus to diverge from each other. One may view the
mathematical theory of harmonic analysis as an example of that
phenomenon. The Levi decomposition
effectively breaks apart the general harmonic analysis to those
for solvable and semisimple groups separately. Both branches have
been investigated in great generality. The Kirillov orbit theory
for solvable groups, along with such developments as the Currey
theory for exponential solvable Lie groups, has given methods for
obtaining explicit results for arbitrary dimensions. On the other
hand the theory developed by Harish-Chandra, Helgason and others
provides a very deep insight into general semisimple homogeneous
spaces and related structures. At the same time, abstract harmonic
analysis for locally compact groups, and the Mackey machine based
upon it, yielded remarkably in describing general principles and
phenomena. However, when a theoretical cosmologist wants to
perform a mode decomposition of some physical field on a
homogeneous spacetime it is almost of no use to him to know that,
for instance, a Borel measure exists, or that it can be computed
up to equivalence for $N$ dimensions. What a cosmologist really
needs is an explicit description with formulas which can be used
without expert knowledge in harmonic analysis, as it is available
for the traditional Abelian group $\mathbb{R}^n$. But as far as we
were able to see, results of that kind do not seem to be available
in the literature; possibly because they are usually not in the
focus of interest of mathematicians. It may be regarded as one of
the tasks of mathematical physics to provide bridges between
occasionally diverging interests of mathematics and physics. With
this in mind we take to the task of giving an explicit description
of harmonic analysis of a number of groups of cosmological
interest.

As stated above, in quantum field theory on cosmological
spacetimes one is often interested in Fourier analysis with
respect to the symmetries of the underlying geometry. In the
framework of homogeneous cosmologies, $M=\mathcal{I}\times\Sigma$
where $\mathcal{I}$ is an open interval, and $\Sigma$ is a smooth
three dimensional manifold with a Riemannian metric $h$ on it. The
spacetime metric $g$ on $M$ takes the line element form
$$
ds_g^2=dt^2-a_{ij}(t)ds_h^ids_h^j
$$
with a positive definite matrix $a_{ij}(t)$ depending smoothly on
the "time coordinate" $t\in\mathcal{I}$. $(\Sigma,h)$ is a
homogeneous space with respect to a Lie group of isometries $G$.
The variety of all such homogeneous spaces which arise in
cosmology can be found in several works including
\cite{Petrov59},\cite{StephaniKramerMacCallumHoenselaersHerlt200305},\cite{pittphilsci1507},\cite{Ryan1975}\cite{Ellis}.
We have analyzed the geometrical setup of general linear
hyperbolic fields on cosmological spacetimes in
\cite{ZhAPub12012}. In particular we have seen that in nearly all
cases one deals with a semidirect product group
$G=\Sigma\rtimes\mathcal{O}$ where $\mathcal{O}$ is either of
$SO(3)$, $SO(2)$, $\{1\}$, and the role of $\Sigma$ is played by
Bianchi I-IX groups $Bi(N)$ ($N=$I,II,...,IX) and their quotients
$Bi(N)/\Gamma$ by discrete normal subgroups $\Gamma$. Such
structures are called {\it semidirect homogeneous spaces}, and a
few important results have been obtained in \cite{ZhAPub12012} in
this generality using abstract harmonic analysis. We moreover have
seen how tightly harmonic and spectral analysis is related to mode
decomposition. The next step towards physics will be to describe
harmonic analysis of all those possible semidirect homogeneous
spaces explicitly. The spaces of maximal symmetry with
$\mathcal{O}=SO(3)$ are the FRW spaces, which are described by
isometry groups $SO(4)$, $E(3)$ or $SO^+(1,3)$. The spaces with
one rotational symmetry are described by $\mathcal{O}=SO(2)$ and
are called LRS (locally rotationally symmetric) spaces. And
finally the purely homogeneous spaces are given by trivial
isotropy groups $\mathcal{O}=\{1\}$. The isometry groups of FRW
spaces are classical groups and their harmonic analysis is also a
classical subject. For purely homogeneous spaces (otherwise called
Bianchi spaces) this is known partially. The Bianchi I group is
the additive group $\mathbb{R}^3$ of which harmonic analysis is
textbook standard. The Bianchi II group is the famous Heisenberg
group of dimension $2+1$, which is well studied, and its harmonic
analysis can be found in
\cite{Thangavelu199803},\cite{Strichartz89},\cite{Folland199502}.
The Bianchi III group is the $ax+b$ group in $2+1$ dimensions
whose harmonic analysis is known as well \cite{Folland199502}. The
Bianchi VIII group is the universal covering group
$\widetilde{SL(2,\mathbb{R})}$, which is a notorious non-linear
group. Its harmonic analysis can be found in \cite{Pukanszky64}.
The last group, Bianchi IX, is simply $SU(2)$ which is again
classical, and its harmonic analysis can be found, e.g., in
\cite{Folland199502}. Little is known about the Bianchi IV-VII
groups beyond the structure of their Lie algebras which are
semidirect products of Abelian algebras $\mathbb{R}^2$ and
$\mathbb{R}$. In fact, although there are principally no obstacles
on the way of their investigation, we were not able to locate any
explicit description of their harmonic analysis in the literature;
we also asked some prominent experts in the field, and none was
able to point to such references. Even less is known about the
semidirect products of Bianchi groups with $SO(2)$ describing LRS
models. We stress again that there is no obstacle to applying the
Mackey machine and perform all calculations, but it seems that
this has not been done so far. Because the discrete subgroups
$\Gamma$ can be readily found from the group structure
\cite{Osinovsky1973}, once having control over harmonic analysis
of the group $G$ it is not hard to reduce it to the quotient
$G/\Gamma$, but it again needs to be done somewhere. We have
chosen to start with harmonic analysis of the solvable Bianchi
I-VII groups in a uniform manner deferring the remaining
structures to the future. One could argue that this might be done
straightforwardly by the exponential solvable methods of
\cite{Currey2005}, but if one actually starts to do that (what we
indeed did) one needs to construct an enormous amount of spaces,
dual spaces and intersections, which are designed to handle
arbitrary groups, and seem to be too bulky to be performed by
hand. Therefore we preferred the original Mackey construction.

The harmonic analytical Fourier transform is only half the way to
applications. In fact what is really useful in concrete
computations is the spectral Fourier transform given by the
eigenfunction expansion of a left invariant Laplace operator
$\Delta$ acting on the sections in the subbundle of $\mathcal{T}$
over $\Sigma$. In \cite{ZhAPub12012} we have seen how the mode
decomposition of linear fields can be performed knowing the
eigenfunction expansion of $\Delta$. We moreover have seen how in
principle one can identify the harmonic analytical Fourier
transform with the spectral Fourier transform of $\Delta$ to
translate the general results on semidirect spaces into the
language of concrete calculations. For this purpose one needs to
find the spectral theory of $\Delta$ explicitly for all possible
$\Sigma$ and invariant Riemannian metrics on them. This is again
something that can hardly be found in the literature, although
spectral analysis in Riemannian spaces is huge and very well
developed a subject in mathematics. In particular, one needs to
know the spectrum and a complete system of eigenfunctions of
$\Delta$ explicitly. In general, the eigenfunction problem of
$\Delta$ is a vector valued elliptic partial differential equation
on a manifold without boundary, which is difficult to compute even
numerically. If this equation admits separation of variables so
that the eigenfunctions are given by combinations of functions of
one variable subject to ordinary differential equations then we
can consider these eigenfunctions as given explicitly in terms of
special functions. In this work we will give such an explicit
description of the spectrum and eigenfunctions of $\Delta$ in
terms of an arbitrary left invariant Riemannian metric on $\Sigma$
for the line bundle over Bianchi I-VII groups. For arbitrary
bundle dimension this is much more complicated. There is a bit of
hope to obtain explicit solutions by transforming the original
vector valued eigenfunction equation on the manifold to a scalar
elliptic eigenfunction equation with constraints on the holonomy
bundle of the linear connection associated with the given fiber
metric. This is a non-trivial task to which we hope to return in
the future. The remaining Bianchi VIII and IX groups are
semisimple, and the spectral analysis on them goes beyond the
methods of the present paper.

We summarize the content of the our exposition as follows.
First the Bianchi I-VII groups are realized in Section
\ref{Section_Semi} as semidirect products of $\mathbb{R}^2$ and
$\mathbb{R}$ and the main group properties are explicitly
computed, such as the multiplication laws, exponential maps,
modular functions and adjoint representations. Then in Section
\ref{Section_Irrep} the dual spaces of the groups are constructed,
i.e., the equivalence classes of unitary irreducible
representations. This is done by means of the Mackey machine. Next
in Section \ref{Section_Orbits} a look is given at the co-adjoint
orbits of the groups in the sense of the Kirillov theory, and it
is described explicitly how the cross sections can be chosen to
parameterize the dual space. Afterwards, in Section
\ref{Section_Plancherel}, an explicit Plancherel formula is given
for all these groups. Thereafter we turn to spectral analysis. The
spectra and the eigenfunctions of $\Delta$ are found explicitly in
Section \ref{SecSpecScal} in terms of the chosen arbitrary left
invariant Riemannian metric. Then in Section \ref{Section_Fourier}
it is shown that these eigenfunctions are complete in $L^2$ and
give rise to a conventional Fourier transform in sense of
\cite{ZhAPub12012}. In the final part the applications in quantum
field theory are discussed. First in Section
\ref{Section_SepVarHomUni} it is shown that Bianchi spacetimes are
ideally adapted for the mode decomposition as given in
\cite{ZhAPub12012}. Then in Section \ref{Section_KG} this mode
decomposition is demonstrated in the example of the Klein-Gordon
field on Bianchi I-VII spacetimes. A number of interesting
consequences are indicated including those for the quantum
Klein-Gordon field, where results from \cite{ZhAThesis} are used
as well.

\section{Semidirect structure of Bianchi I-VII groups\label{Section_Semi}}

As a first step we will try to explicitly realize the solvable
Bianchi II-VII groups  (I is Abelian and will serve as a starting
point in the analysis of others) as semidirect products of Abelian
subgroups. A classification of solvable real Lie algebras with
respect to such products can be inferred from
\cite{Mubarakzyanov63}.

{\bf Semidirect products of Lie algebras and Lie groups.} We start
by recalling some definitions. Let $\mathfrak{a}$ and
$\mathfrak{b}$ be Lie algebras, and let $D(\mathfrak{a})$ be the
Lie algebra of derivations on $\mathfrak{a}$. Let further
$f:\mathfrak{b}\mapsto D(\mathfrak{a})$ be a Lie algebra
homomorphism. The {\it semidirect product Lie algebra}
$\mathfrak{a}\times_f\mathfrak{b}$ is the algebra modelled on
$\mathfrak{a}\oplus\mathfrak{b}$ with the Lie bracket
$$
[(a,b),(a',b')]=([a,a']+f(b)a'-f(b')a,[b,b'])\mbox{,
}(a,b),(a',b')\in\mathfrak{a}\oplus\mathfrak{b}.
$$
Let, on the other hand, $A$ and $B$ be Lie groups, and
$F:B\mapsto\Aut(A)$ a Lie group homomorphism ($\Aut(A)$ embedded
into $GL(A)$). The {\it semidirect product} $A\times_F B$ of
groups $A$ and $B$ is defined as the Lie group modelled on the
product manifold $A\times B$ with the multiplication
$$
(a,b)(a',b')=(aF(b)a',bb'),\qquad (a,b),(a',b')\in A\times_F B.
$$
Following the notation of \cite{Hochschild1965}, denote by
$F^\circ:B\mapsto\Aut(\mathfrak{a})$ the map $B\ni b\mapsto
d[F(b)]\in\Aut(\mathfrak{a})$, where $\mathfrak{a}$ is the Lie
algebra of $A$. Then the derivative of this map, $f=dF^\circ$,
will be a Lie algebra homomorphism $f:\mathfrak{b}\mapsto
D(\mathfrak{a})$ ($\mathfrak{b}$ the Lie algebra of $B$), and the
Lie algebra of the direct product Lie group $A\times_F B$ is the
direct product Lie algebra $\mathfrak{a}\times_f\mathfrak{b}$
\cite{Hochschild1965}.

{\bf Bianchi I-VII groups as semidirect products.} With this in
mind, let us start with realizing Bianchi I-VII algebras as
semidirect product algebras
$\mathfrak{g}=\mathbb{R}^2\times_f\mathbb{R}$ with some Lie
algebra homomorphism $f:\mathbb{R}\mapsto D(\mathbb{R}^2)$. This
correspondence between Bianchi algebras and homomorphisms $f$ can
be obtained by combination of \cite{McCallum1998} and
\cite{Mubarakzyanov63}. (Those uncomfortable with Russian may
simply perform the semidirect product construction and check the
commutation relations.) Namely, in each case $f(r)=r\cdot M$,
$r\in\mathbb{R}$, in a suitable basis, where $M$ is a $2\times2$
matrix. The matrix $M$ for each algebra is given in Table
\ref{MTable} below.

\begin{table}
\centering
\begin{tabular}{|c|c|c|c|c|c|c|}
\hline

I & II & III & IV & V & VI & VII\\\hline

0 &

$\begin{pmatrix} 0 & 0\\
1 & 0
\end{pmatrix}$ &

$\begin{pmatrix}
1 & 0\\
0 & 0
\end{pmatrix}$ &

$\begin{pmatrix} 1 & 0\\
1 & 1
\end{pmatrix}$ &

$\begin{pmatrix} 1 & 0\\
0 & 1
\end{pmatrix}$ &

$\begin{pmatrix} 1 & 0\\
0 & -q
\end{pmatrix}$ &

$\begin{pmatrix} p & -1\\
1 & p
\end{pmatrix}$\\

 & & & & & $-1<q\le1$ & $p\ge0$\\\hline
\end{tabular}
\caption{\label{MTable}The matrices $M$ for Bianchi I-VII groups}
\end{table}

The corresponding integral homomorphisms $F^\circ$ will be the
exponentials $F^\circ(r)=e^{rM}$ (note that the exponential map on
the group $\mathbb{R}$ is given by the identity map). If a
diffeomorphism is given locally by a linear coordinate map,
$x_i'=A_i^jx_j$ with the matrix $A$, then its differential will be
given by the same matrix $A$. Now that $F^\circ(r)=d[F(r)]$ and
that $F(r)$ are linear automorphisms, it follows that
$F(r)=e^{rM}$. Thus all Bianchi groups I-VII are given by
semidirect products $G=\mathbb{R}^2\times_F\mathbb{R}$, where for
each class the group homomorphism
$F:\mathbb{R}\mapsto\Aut(\mathbb{R}^2)$ is given as in Table
\ref{F_rTable} above.

\begin{table}
\centering
\begin{tabular}{|c|c|c|c|c|c|c|}
\hline

I & II & III & IV & V & VI & VII\\\hline

1 &

$\begin{pmatrix} 1 & 0\\
r & 1
\end{pmatrix}$ &

$\begin{pmatrix}
e^r & 0\\
0 & 1
\end{pmatrix}$ &

$\begin{pmatrix} e^r & 0\\
re^r & e^r
\end{pmatrix}$ &

$\begin{pmatrix} e^r & 0\\
0 & e^r
\end{pmatrix}$ &

$\begin{pmatrix} e^r & 0\\
0 & e^{-qr}
\end{pmatrix}$ &

$e^{pr}\begin{pmatrix} \cos(r) & -\sin(r)\\
\sin(r) & \cos(r)
\end{pmatrix}$\\

 & & & & & $-1<q\le1$ & $p\ge0$\\\hline
\end{tabular}
\caption{\label{F_rTable}The matrices $F(r)$ for Bianchi I-VII
groups}
\end{table}

We appoint to use capital symbols $X,Y,Z$ for Lie algebra
coordinates and small symbols $x,y,z$ for Lie group coordinates,
but these may interfere in some calculations involving exponential
maps. It follows that the group multiplication is
$$
(x,y,z)(x',y',z')=((x,y)+F(z)(x',y'),z+z')\mbox{,
}(x,y,z),(x',y',z')\in G=\mathbb{R}^2\times_F\mathbb{R}.
$$

{\bf The exponential map.} Finally we note that all 7 groups are
exponential, and the exponential map is given as follows. Let
$(X,Y,Z)\in\mathfrak{g}=\mathbb{R}^2\times_f\mathbb{R}$ with
$(X,Y)\in\mathbb{R}^2$ and $Z\in\mathbb{R}$. We use the Zassenhaus
formula
$$
\exp(A+B)=\exp(A)\exp(B)\exp(C_2)\exp(C_3)...,
$$
where the coefficients $C_m$ are homogeneous Lie algebra elements
composed of nested commutators of order $m$. We will use the
convenient method of obtaining $C_m$ recursively as given in
\cite{Magnus1954}. If we set $A=(X,Y,0)$ and $B=(0,0,Z)$, we
obtain
$$
[A,B]=-f(Z)A.
$$
Now equating the homogeneous summands of any order of (4.7) and
(4.8) of \cite{Magnus1954}, we obtain recursion formulas for $C_m$
which are bulky in general. However, trying an ansatz
$C_m=\alpha_m (-f)^{m-1}(Z)A$, $\alpha_m\in\mathbb{R}$, and
checking directly for $m=2$, one can easily prove it inductively,
and find
$$
\alpha_m=\frac{1-m}{m!}.
$$
It remains to calculate
$$
\exp(C_2)\exp(C_3)...=\exp\left(\sum_{m=1}^\infty
\frac{1-m}{m!}(-f)^{m-1}(Z)A\right).
$$
If $f(Z)$ is invertible for all $Z$ then we write
$$
\frac{1-m}{m!}(-f)^{m-1}(Z)=(-f)^{-1}(Z)\frac{(-f)^m(Z)}{m!}-\frac{(-f)^{m-1}(Z)}{(m-1)!},
$$
and obtain
\begin{eqnarray}
D(Z)\doteq\sum_{m=1}^\infty
\frac{1-m}{m!}(-f)^{m-1}(Z)=(-f)^{-1}(Z)\left(e^{-f(Z)}-1\right)-e^{-f(Z)}\nonumber\\
=f^{-1}(Z)\left(1-F(-Z)\right)-F(-Z).\label{D_ZDef}
\end{eqnarray}
It is only for Bianchi II and III that $f(Z)$ is degenerate, and
for these two we can compute directly
$$
D(Z)=\sum_{m=1}^\infty
\frac{1-m}{m!}(-f)^{m-1}(Z)=\frac{1}{2}f(Z)\qquad\mbox{for Bianchi
II}
$$
and
$$
D(Z)=\sum_{m=1}^\infty
\frac{1-m}{m!}(-f)^{m-1}(Z)=(1-2e^{-1})f(Z)\qquad\mbox{for Bianchi
III.}
$$
Thus we arrive at
$$
\exp((X,Y,0)+(0,0,Z))=\exp((X,Y,0))\exp((0,0,Z))\exp(D(Z)(X,Y),0).
$$
The exponential maps of $\mathbb{R}^2$ and $\mathbb{R}$ are the
identity maps, therefore
$$
(x,y,z)=\exp((X,Y,Z))=(X,Y,Z)(D(Z)(X,Y),0)=([1+F(Z)D(Z)](X,Y),Z),
$$
where $F(Z)$ should be understood as $F(\exp(Z))$. The matrices
$D(Z)$ appear somewhat bulky so we refrain from presenting them in
a table.

{\bf The adjoint representations $\Ad$ and $\ad$.} Let
$(g_x,g_y,g_z),(x,y,z)\in G$. Their conjugation
$(x',y',z')=(g_x,g_y,g_z)(x,y,z)(g_x,g_y,g_z)^{-1}$ is given by
$$
(x',y',z')=((1-F(z))(g_x,g_y)+F(g_z)(x,y),Z).
$$
The adjoint representation $\Ad$ is the differential of this map
at the identity element $(x,y,z)=(0,0,0)$, and so it is given by
the matrix field $\Ad_g$,
$$
\Ad_g=
\begin{pmatrix}
F(g_z) & -F'(0)\begin{pmatrix}g_x\\ g_y\end{pmatrix}\\
\begin{matrix}0 &  0\end{matrix} & 1
\end{pmatrix}.
$$
The adjoint representation of the Lie algebra is given by the
matrix\footnote{We use the general relation $\ad_{\sf X}${\sf
Y}=[{\sf X},{\sf Y}] for elements ${\sf X}$, ${\sf Y}$ in a
general Lie algebra.}
$$
\ad_{(X,Y,Z)}=
\begin{pmatrix}
f(Z) & -f'(0)\begin{pmatrix}X\\ Y\end{pmatrix}\\
\begin{matrix}0 & 0\end{matrix} & 0
\end{pmatrix}.
$$

{\bf The Haar measure and the modular function.} The Haar measure
on the Lie group is given by
$$
dg=d(\exp(X,Y,Z))=j(X,Y,Z)dXdYdZ,
$$
where
$$
j(X,Y,Z)=\mathfrak{h}\det\frac{1-e^{-\ad_{(X,Y,Z)}}}{\ad_{(X,Y,Z)}}\mbox{,
}(X,Y,Z)\in\mathfrak{g},
$$
and $0<\mathfrak{h}\in\mathbb{R}$ is an arbitrary constant. In
group coordinates one can check that $dg$ is given by
$$
dg=\mathfrak{h}\det F(-g_z)dg_zdg_ydg_z\mbox{, }(g_x,g_y,g_z)\in
G.
$$
The groups are all non-compact, so there is no preferred
normalization for the constant $\mathfrak{h}$. Later it will be
determined as related to the chosen left invariant Riemannian
metric on $G$. The modular function $\Delta(g)=\det\Ad_g^{-1}$ can
be readily seen to be $\Delta(g)=\det F(-g_z)$.

This temporarily completes our task of analyzing the Bianchi I-VII
groups as semidirect products. In the next section we will
concentrate on their dual spaces.

\section{The irreducible representations of Bianchi I-VII groups \label{Section_Irrep}}

In this section we will try to find the dual spaces of Bianchi
I-VII groups using the Mackey procedure. Let us start with Bianchi
I, which is simply the additive group $\mathbb{R}^3$. Its dual
group $\hat{\mathbb{R}}^3$ is homeomorphic to itself,
$\hat{\mathbb{R}}^3=\mathbb{R}^3$, and the irreducible
1-dimensional representations are given by
$$
\xi_{\vec k}(\vec x)=e^{i\{\vec k,\vec x\}}\mbox{, }\vec
x\in\mathbb{R}^3\mbox{, }\vec k\in\hat{\mathbb{R}}^3=\mathbb{R}^3,
$$
where we appoint to denote by $\{\vec a,\vec b\}$ the usual
Euclidean product of three-vectors $\vec a,\vec b\in\mathbb{R}^3$.
These scalar functions $\xi_{\vec k}$ can be viewed as unitary
operator valued functions acting on the one complex dimensional
Hilbert space $\mathbb{C}$.

{\bf The Mackey procedure for normal Abelian subgroups.} We cite
here the setup of the Mackey theory for groups with a normal
Abelian subgroup as given in \cite{Folland199502}. Let $G$ be a
locally compact group and $N$ an Abelian normal subgroup. Then $G$
acts on $N$ by conjugation, and this induces an action of $G$ on
the dual group $\hat N$ defined by
$$
g\nu(n)=\nu(g^{-1}ng)\mbox{, }g\in G\mbox{, }\nu\in\hat N\mbox{,
}n\in N.
$$
For each $\nu\in\hat N$, we denote by $G_\nu$ the {\it stabilizer}
of $\nu$,
$$
G_\nu=\{g\in G\mbox{: }g\nu=\nu\},
$$
which is a closed subgroup of $G$, and we denote by
$\mathcal{O}_\nu$ the {\it orbit} of $\nu$:
$$
\mathcal{O}_\nu=\{g\nu\mbox{: }g\in G\}.
$$
The action of $G$ on $\hat N$ is said to be {\it regular} if some
conditions are satisfied. To avoid presenting excessive
information we only mention that if $G$ is second countable (which
is true for a Lie group), then the condition for a regular action
is equivalent to the following: for each $\nu\in\hat N$, the
natural map $gG_\nu\mapsto g\nu$ from $G/G_\nu$ to
$\mathcal{O}_\nu$ is a homeomorphism. In our case $\hat N$ is a
smooth manifold, and the group actions are all smooth, hence this
map is not only a homeomorphism but even a diffeomorphism. The
constructions become simpler under the assumption that $G$ is a
semidirect product of $N$ and the factor group $H=G/N$. We define
the {\it little group} $H_\nu$ of $\nu\in\hat N$ to be
$H_\nu=G_\nu\cap H$. Now we cite a beautiful theorem which appears
as {\bf Theorem 6.42} in \cite{Folland199502} and expresses the
essence of the Mackey procedure. The notation $\mbox{\bf Ind}$ and
the inducing construction are briefly introduced in the {\bf
Appendix A}.

\begin{theorem} [Folland,\cite{Folland199502}] Suppose $G=N\ltimes H$, where $N$ is Abelian and
$G$ acts regularly on $\hat N$. If $\nu\in\hat N$ and $\rho$ is an
irreducible representation of $H_\nu$, then $\mbox{\bf
Ind}^G_{G_\nu}(\nu\rho)$ is an irreducible representation of $G$,
and every irreducible representation of $G$ is equivalent to one
of this form. Moreover, $\mbox{\bf Ind}^G_{G_\nu}(\nu\rho)$ and
$\mbox{\bf Ind}^G_{G_\nu}(\nu'\rho')$ are equivalent if and only
if $\nu$ and $\nu'$ belong to the same orbit, say $\nu'=g\nu$, and
$h\mapsto\rho(h)$ and $h\mapsto\rho'(g^{-1}hg)$ are equivalent
representations of $H_\nu$.
\end{theorem}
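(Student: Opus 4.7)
The plan is to establish the three assertions of the theorem in turn: (i) $\mbox{\bf Ind}^G_{G_\nu}(\nu\rho)$ is irreducible; (ii) every irreducible representation of $G$ is of this form; (iii) the equivalence criterion. The key tool throughout is the system of imprimitivity supplied by the normal abelian subgroup $N$, combined with Mackey's imprimitivity theorem.

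First I would verify that $\nu\rho$ is a well-defined representation of $G_\nu = N \rtimes H_\nu$: because $H_\nu$ stabilizes $\nu$, the prescription $(\nu\rho)(n,h) = \nu(n)\rho(h)$ defines a group homomorphism on the semidirect product. The induced representation $\pi = \mbox{\bf Ind}^G_{G_\nu}(\nu\rho)$ is then realized on an $L^2$-space of $\rho$-equivariant sections over $G/G_\nu \cong \mathcal{O}_\nu$. For irreducibility, I would restrict $\pi$ to $N$ and use the disintegration theorem to identify $\pi|_N$ with a direct integral $\int^\oplus_{\mathcal{O}_\nu} \nu' \, d\mu(\nu') \otimes \id_{\mathcal{H}_\rho}$ of characters with uniform multiplicity $\dim\rho$. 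Any operator in the commutant of $\pi$ commutes with $\pi|_N$, hence decomposes as a measurable field of operators on the fibers $\mathcal{H}_\rho$; the further commutation with translation by $H$ forces each fiber operator to intertwine $\rho$ with itself, and irreducibility of $\rho$ combined with Schur's lemma yields a scalar, proving $\pi$ irreducible.

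For the exhaustion claim, suppose $\pi$ is an arbitrary irreducible representation of $G$ on $\mathcal{H}$. Restrict to $N$ and disintegrate $\pi|_N$ via the SNAG theorem, producing a projection-valued measure $P$ on $\hat N$ with scalar spectral measure $\mu$. Conjugation by $\pi(g)$ implements the $G$-action on $\hat N$, so $\mu$ is quasi-invariant under $G$. The key step, and the one I expect to be the main technical obstacle, is to show that $\mu$ is concentrated on a single orbit: irreducibility of $\pi$ forces $\mu$ to be $G$-ergodic, and the regularity hypothesis (equivalently, for second countable $G$, the countable separation of the orbit space $\hat N/G$) excludes any ergodic quasi-invariant measure that is not supported on a single orbit. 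Having reduced the $N$-spectrum to $\mathcal{O}_\nu$, Mackey's imprimitivity theorem identifies $\pi$ with $\mbox{\bf Ind}^G_{G_\nu}(\sigma)$ for some representation $\sigma$ of $G_\nu$ whose restriction to $N$ is a multiple of $\nu$. Writing $\sigma(n,h) = \nu(n)\rho(h)$ with $\rho$ a representation of $H_\nu$, the irreducibility of $\pi$ translates, by the analysis of the previous paragraph, into irreducibility of $\rho$.

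Finally, for the equivalence criterion: if $\nu$ and $\nu'$ lie in distinct orbits, the spectral measures of the restrictions $\pi|_N$ and $\pi'|_N$ are mutually singular, hence $\pi$ and $\pi'$ are disjoint and a fortiori inequivalent. If $\nu' = g\nu$, conjugation by $g$ yields an isomorphism $G_{\nu'} \to G_\nu$ and a unitary equivalence between $\mbox{\bf Ind}^G_{G_{\nu'}}(\nu'\rho')$ and $\mbox{\bf Ind}^G_{G_\nu}\bigl(\nu \cdot \rho'(g^{-1}\cdot g)\bigr)$; Mackey's equivalence criterion for induction from a common subgroup then reduces the equivalence with $\mbox{\bf Ind}^G_{G_\nu}(\nu\rho)$ to the equivalence of $\rho$ and $h\mapsto\rho'(g^{-1}hg)$ as representations of $H_\nu$, which is precisely the claim. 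As indicated, the central difficulty is the measure-theoretic concentration argument in step (ii), where the regularity assumption plays its essential role.
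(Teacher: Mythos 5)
The paper does not prove this statement at all --- it is quoted verbatim as Theorem 6.42 of Folland's book and used as a black box --- so there is no internal proof to compare against. Your outline is a correct and faithful sketch of the standard Mackey-machine argument that Folland himself gives (systems of imprimitivity, SNAG disintegration of $\pi|_N$, ergodicity plus regularity forcing concentration of the spectral measure on a single orbit, and the imprimitivity theorem for both exhaustion and the equivalence criterion), so it takes essentially the same approach as the cited source.
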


{\bf Application to the Bianchi groups.} It is easy to see that
Bianchi groups II-VII satisfy the assumptions of the theorem. In
this case $N=\mathbb{R}^2$ and $H=\mathbb{R}$, the dual of $N$ is
$\hat N=\mathbb{R}^2$ and is given by
$$
\hat N=\{e^{i\{\breve k,\breve x\}}\mbox{: }\breve x,\breve
k\in\mathbb{R}^2\},
$$
where we overload the notation by brackets $\{\breve a,\breve b\}$
to denote the two dimensional Euclidean product of $\breve
a,\breve b\in\mathbb{R}^2$. Let $\imath_N:\mathbb{R}^2\mapsto G$
be the natural inclusion. The action of $G$ on $\hat N$ is given
by
$$
g\xi_{\breve k}(\breve x)=\xi_{\breve
k}(\imath_N^{-1}(g^{-1}\imath_N(\breve x)g)).
$$
All Bianchi solvable groups are homeomorphic to $\mathbb{R}^3$,
and we may choose a global chart on them. In particular we choose
one adapted to the semidirect structure
$\mathbb{R}^2\times_{F}\mathbb{R}$ presented in the previous
section. Then the multiplication law in $G$ is given by
$$
(x,y,z)(x',y',z')=((x,y)+F(z)(x',y'),z+z').
$$
The unit $e\in G$ is given by $e=(0,0,0)$, and the inverse map by
$$
(x,y,z)^{-1}=(-F^{-1}(z)(x,y),-z).
$$
In particular, if $(\breve x,0)=(x,y,0)\in\imath_N(\mathbb{R}^2)$
and $(g_x,g_y,g_z)\in G$, then
$$
(g_x,g_y,g_z)^{-1}(x,y,0)(g_x,g_y,g_z)=(F^{-1}(g_z)(x,y),0),
$$
that is, the conjugation map $n\mapsto g^{-1}ng$ is given by
$(x,y)\mapsto F^{-1}(g_z)(x,y)$. Thus the action of $G$ on $\hat
N$ is
$$
g\xi_{\breve k}(\breve x)=\xi_{\breve k}(F^{-1}(g_z)\breve
x)=e^{i\{\breve k,F^{-1}(g_z)\breve x\}}=e^{i\{F^\bot(g_z)\breve
k,\breve x\}},
$$
where $F^\bot(g_z)$ is the inverse transpose of the matrix
$F(g_z)$. This means that this action can be described by
$$
g\breve k=F^\bot(g_z)\breve k\mbox{, }g\in G\mbox{, }\breve
k\in\mathbb{R}^2.
$$
Denote by $V^0\subset\mathbb{R}^2$ the eigenspace of $M^\top$
corresponding to the eigenvalue $0$ (the null space). Then it will
be also the joint eigenspace of the matrices
$F^\bot(g_z)=e^{-g_zM^\top}$ corresponding to the eigenvalue $1$
simultaneously for all $g_z\in\mathbb{R}$. Let us write the
stabilizer condition,
$$
e^{-g_zM^\top}\breve k=\breve k.
$$
Then the stabilizer $G_{\breve k}$ and the little group $H_{\breve
k}$ will be
$$
G_{\breve k}=\imath_N(\mathbb{R}^2)\cdot H_{\breve k}
$$
and
$$
H_{\breve k}=\cases{\mathbb{R} & if $\breve k\in V^0$,\\
\{0\} & else. }
$$
Define the following space of irreducible representations of $G$:
$$
\hat J=(V^0\times\mathbb{R})\cup(\mathbb{R}^2\setminus V^0).
$$
For each $\mu\in\hat J$ the corresponding irreducible
representation is given by
$$
T_\mu(g)=e^{i\{\breve k,\breve g\}}e^{ik_3g_3}\mbox{, }\mu=(\breve
k,k_3)=(k_1,k_2,k_3)
$$
if $\mu\in V^0\times\mathbb{R}$, and
$$
T_\mu=T_{\breve k}=\mbox{\bf Ind}^G_{\mathbb{R}^2}(e^{i\{\breve
k,.\}})\mbox{, }\mu=\breve k,
$$
if $\mu\in\mathbb{R}^2\setminus V^0$. The orbit
$\mathcal{O}_{\breve k}$ is $\{\breve k\}$ if $\breve k\in V^0$
and $F^\bot(\mathbb{R})\breve k$ otherwise. As mentioned in the
theorem, two representations $\mu,\mu'\in\hat J$ are equivalent if
and only if $\breve k$ and $\breve k'$ are on the same orbit,
$\breve k=F^\bot(z)\breve k'$, and the corresponding
representations of $H_{\breve k}$ and $H_{\breve k'}$ are
equivalent when intertwined with the action of $z$. The first
condition can be satisfied non-trivially if $\breve k,\breve
k'\in\mathbb{R}^2\setminus V^0$, but then $H_{\breve k}=H_{\breve
k'}=\{0\}$, and thus there exists only the trivial representation
$\rho=1$. Thus representations $\mu,\mu'\in\mathbb{R}^2\setminus
V^0$ are equivalent if and only if they are on the same orbit. On
the other hand, let $\mu,\mu'\in V^0\times\mathbb{R}$ such that
$\breve k=\breve k'$, and the first condition is satisfied
trivially. Then $G_{\breve k}=G$, and $G/G_{\breve k}=\{1\}$, so
the action of $1$ cannot intertwine inequivalent representations
of $H_{\breve k}$. Thus $\mu\sim\mu'$ means $\mu=\mu'$. Therefore
the dual space $\hat G$ of $G$ will be
$$
\hat G=(V^0\times\mathbb{R})\cup(\mathbb{R}^2\setminus
V^0)/F^\bot(\mathbb{R}).
$$

{\bf The null spaces $V^0$.} Finally let us find the eigenspaces
$V^0$ for different Bianchi groups. By a calculation of
eigenvectors and eigenvalues of $M$ we obtain
$$
V_I^0=\mathbb{R}^2\mbox{, }V_{II}^0=\mathbb{R}\oplus\{0\}\mbox{,
}V_{III}^0=\{0\}\oplus\mathbb{R},
$$
$$
V_{IV}^0=\{0\}\mbox{, }V_V^0=\{0\}\mbox{, }V_{VII}=\{0\},
$$
and
$$
V_{VI}= \cases{
\{0\}\oplus\mathbb{R} & if $q=0$, \\
\{0\} & else.}
$$
As it is already visible from Table \ref{MTable}, the Bianchi
group VI with $q=0$ coincides with Bianchi III group, and this
fact is also reflected in the null spaces above. Note that as
always with solvable groups, the irreducible representations are
either 1-dimensional or infinite dimensional.

To obtain explicit descriptions of the dual groups $\hat G$ for
each Bianchi class we have to calculate the orbits
$\mathcal{O}_{\breve k}=F^\bot(\mathbb{R})\breve k$ explicitly,
which is done in the next section. Note that the entire
construction could have been performed through the machinery of
exponential solvable Lie groups developed in \cite{Currey1991} and
\cite{Currey2005}, where the problem is treated exhaustively. In
particular, it was shown that (as adapted to our terminology)
there exists a cross section $\tilde K$, an algebraic submanifold
of $\mathbb{R}^2$ which crosses each {\it generic orbit} (i.e., an
orbit of maximal dimension) exactly once, and thus parameterizes
the infinite dimensional representations. Having explicitly
calculated $\tilde K$ we find $\hat
G=(V^0\times\mathbb{R})\cup\tilde K$. But the methods of
\cite{Currey1991} are extremely general and involve simple but
lengthy algebraic calculations; this is why we have preferred the
original topological Mackey constructions.

\section{Co-adjoint orbits of Bianchi II-VII groups \label{Section_Orbits}}

The term co-adjoint orbits would probably suit better to the
solvable Lie theoretical method of orbits as established by
Kirillov and accomplished by Currey. At this point we deviate to
present a little digression demonstrating the equivalence of that
approach with that we have adopted.

{\bf The Kirillov approach.} The Lie algebra
$\mathfrak{g}=\mathbb{R}^2\times_f\mathbb{R}$ of $G$ is modelled
on the vector space $\mathbb{R}^3$, and as such its dual space
$\mathfrak{g}'$ is again isomorphic to $\mathbb{R}^3$. We will fix
this isomorphism by choosing the basis in $\mathfrak{g}'$ dual to
our adapted basis of $\mathfrak{g}$. With this identification the
co-adjoint action of $G$ on $\mathfrak{g}'=\mathbb{R}^3$ is given
by the matrix field $\Ad_g^*=\Ad_g^\bot$,
$$
\Ad_g^*=
\begin{pmatrix}
F^\bot(g_z) & & 0\\
       & & 0\\
(g_x,g_y)F^\bot(g_z)M^\top & & 1
\end{pmatrix}.
$$
For any $\mathfrak{l}=(X^*,Y^*,Z^*)\in\mathfrak{g}'$ its orbit
$\mathcal{O}_\mathfrak{l}$ is given by
$$
\mathcal{O}_\mathfrak{l}=(F^\bot(\mathbb{R})(X^*,Y^*),(\mathbb{R},\mathbb{R})F^\bot(\mathbb{R})M^\top(X^*,Y^*)+Z^*),
$$
and the space of orbits $\{\mathcal{O}_\mathfrak{l}\}$ with the
quotient topology induced from $\mathfrak{g}'$ is homeomorphic to
$\hat G$ with the Fell topology \cite{Folland199502}. One can
easily see that the orbits are of two types: those of
$(X^*,Y^*,Z^*)$ with $(X^*,Y^*)\in V^0$ or $(X^*,Y^*)\notin V^0$.
The former are the so called {\it degenerate} orbits with
dimension $0$ (singletons), and the latter are the {\it generic}
orbits with maximal dimension $3$. This is exactly the same result
we obtained above by Mackey machine.\index{Representation,
generic, singleton}

{\bf The generic orbits and the cross sections.} We will denote
the range of a parameterized quantity $Q(p)$ of a parameter $p\in
P$ by $Q(P)$. For instance, $F^{\bot}(\mathbb{R})$ will denote the
range of the quantity $F^{\bot}(r)$ when $r$ runs over
$\mathbb{R}$. Here we will try to find the generic orbits
$F^\bot(\mathbb{R})\breve k_0\in\hat G$ mentioned in the previous
section and corresponding cross-sections $\tilde
K\in\mathbb{R}^2$. The latter will be algebraic manifolds composed
of one or more connected components. In all cases $V^0$ is a
subset of Lebesgue measure $0$ in $\mathbb{R}^2$. By the
definition of the cross section $\tilde K$, the submanifold
$\mathbb{R}^2\setminus V^0$ can be parameterized by a global chart
$\breve k=\breve k(k,r)$, $(k,r)\in\mathfrak{K}\times\mathbb{R}$,
such that $\breve k(k,r)=F^{\bot}(r)\breve k_0(k)$ and $\breve
k_0(k)=\breve k(k,0)\in\tilde K$. Under this diffeomorphism the
Lebesgue measure $d\breve k$ becomes $\rho(k,r)dkdr$, where
$\rho(k,r)=|\det\partial(\breve k)/\partial(k,r)|$.

Now let us proceed to the determination of the orbits and the
cross sections case by case. Figure 1 in {\bf Appendix B}
illustrates them qualitatively.

II. We have
$$
F^{\bot}(r)(k_x,k_y)=(k_x-rk_y,k_y),
$$
hence the orbit through $\breve k\in\mathbb{R}^2\setminus V^0$ is
$F^{\bot}(\mathbb{R})(k_x,k_y)=(\mathbb{R},k_y)$. The cross
section can be chosen to be $\tilde K=\breve k_0(\mathfrak{K})$,
$\mathfrak{K}=\mathbb{R}\setminus\{0\}$, $\breve k_0(k)=(0,k)$.
Indeed, any orbit $(\mathbb{R},k_y)$ meets $\tilde K$ exactly once
at $\breve k_0(k_y)$. Then
$$
\rho(k,r)=\left|\det\left(F^{\bot}(r)\frac{\partial\breve
k_0(k)}{\partial k},\frac{\partial F^{\bot}(r)}{\partial r}\breve
k_0(k)\right)\right|=|k|.
$$

III. In this case
$$
F^{\bot}(r)(k_x,k_y)=(e^{-r}k_x,k_y),
$$
and the orbit through $\breve k\in\mathbb{R}^2\setminus V^0$ is
$F^{\bot}(\mathbb{R})(k_x,k_y)=(\sgn(k_x)\mathbb{R}_+,k_y)$. Let
$\mathfrak{K}=\mathbb{R}\times\{-1,1\}$, $k=(k_1,k_2)$. The cross
section is the image $(-1,\mathbb{R})\cup(1,\mathbb{R})$ of the
map $\breve k_0(k)=(k_2,k_1)$. We find
$$
\rho(k,r)=e^{-r}.
$$

IV. For this group
$$
F^{\bot}(r)(k_x,k_y)=(e^{-r}k_x-re^{-1}k_y,e^{-r}k_y),
$$
and the orbits are complicated. We set
$\mathfrak{K}=\mathbb{R}_{+0}\times\{-1,1\}$, $k=(k_1,k_2)$ and
$\breve k_0(k)=(k_2,k_2k_1)$. That this is a cross section can be
checked immediately. The measure density $\rho$ is
$$
\rho(k,r)=e^{-2r}(1+k_1).
$$

V. Now
$$
F^{\bot}(r)(k_x,k_y)=e^{-r}(k_x,k_y),
$$
and the orbits are simply the incoming radial rays. Set
$\mathfrak{K}=\mathbb{R}/2\pi\mathbb{Z}$ and $\breve
k_0(k)=(\cos(k),\sin(k))$. It follows that
$$
\rho(k,r)=e^{-2r}.
$$

VI. For this group we consider only the case $q\neq 0$ as $q=0$ is
just the group III.
$$
F^{\bot}(r)(k_x,k_y)=(e^{-r}k_x,e^{qr}k_y),
$$
and the orbits are incoming polynomial curves if $q<0$ and
hyperbolic curves if $q>0$. For $q<0$ set
$\mathfrak{K}=\mathbb{R}/2\pi\mathbb{Z}$ and $\breve
k_0(k)=(\cos(k),\sin(k))$. Then
$$
\rho(k,r)=e^{-(1-q)r}(\cos^2(k)-q\sin^2(k)).
$$
For $q>0$ set $\mathfrak{K}=\mathbb{R}_{+0}\times\{0,1,2,3\}$,
$k=(k_1,k_2)$ and
$$
\breve k_0(k)=
\begin{pmatrix}
0 & -1\\
1 & 0
\end{pmatrix}^{k_2}
\begin{pmatrix}
1\\
k_1
\end{pmatrix}.
$$
Thus
$$
\rho(k,r)=q^{k_2\bmod2}e^{-(1-q)r}.
$$

VII. The co-adjoint action in this group is given by
$$
F^{\bot}(r)(k_x,k_y)=e^{-pr}(k_x\cos r-k_y\sin r,k_x\sin r+k_y\cos
r),
$$
and the orbits are incoming or outgoing spirals depending on
whether $p<0$ or $p>0$. We take $\mathfrak{K}=(-e^{\pi
p};-1]\cup[1,e^{\pi p})$ and $\breve k_0(k)=(k,0)$. Each orbit
clearly intersects $\tilde K$ exactly once. Finally
$$
\rho(k,r)=e^{-2pr}|k|.
$$

Note that in all cases we have chosen $\tilde K$ such that it
possesses an involution $\breve k_0(-k)=-\breve k_0(k)$, which
will be useful in later constructions. Of course, these choices of
cross sections are not unique, neither need they correspond to
those suggested by Currey theory. In fact, one may make any other
choice for convenience and calculate the corresponding measure
density $\rho$ precisely as we did.

\section{The explicit Plancherel formula for Bianchi II-VII groups \label{Section_Plancherel}}

We will obtain the Plancherel measure by extending the idea
suggested in \cite{Folland199502} for Heisenberg groups to all
solvable Bianchi groups. Namely, we will exploit the Euclidean
Parseval equality on the homeomorphic space $\mathbb{R}^3$.

{\bf Introductory material.} Before going to the solvable groups
II-VII let us recall the well-known form of the Plancherel formula
for the Abelian group $\mathbb{R}^3$. The Fourier transform of a
function $f\in C^\infty_0(\mathbb{R}^3)$ is defined by
$$
\hat f(\vec k)=\int_{\mathbb{R}^3}d\vec xe^{-i\{\vec k,\vec
x\}}f(\vec x),
$$
and the Plancherel formula is
$$
\int_{\mathbb{R}^3}d\vec x|f(\vec
x)|^2=(2\pi)^3\int_{\mathbb{R}^3}d\vec k|\hat f(\vec k)|^2.
$$
The Plancherel measure is simply $d\nu(\vec k)=(2\pi)^3 d\vec k$,
proportional to the Lebesgue measure on $\mathbb{R}^3$.

We start by noting that, being an algebraic (matrix) group, $G$ is
necessarily type I (Theorem 7.8 or 7.10 \cite{Folland199502}), and
the normal subgroup $N$ is unimodular and therefore in the kernel
of the modular function $\Delta$. It follows from {\bf Theorem
7.6} in \cite{Folland199502} that the Mackey Borel structure on
$\hat G$ is standard, and thereby due to {\bf Lemma 7.39} in
\cite{Folland199502} we have a measurable field of representations
$\pi_p$ on $p\in\hat G$, such that $\pi_p\in p$ (or equivalently,
we have a measurable choice of representatives of each equivalence
class $[\pi]\in\hat G$). Henceforth we will speak of a
representation $\pi\in\hat G$ meaning the value of this field at a
given point $[\pi]\in\hat G$. As can be inferred from
\cite{Currey2005} in the language of solvable Lie groups, only
those irreducible representations corresponding to the generic
orbits (i.e., orbits of maximal dimension) admit a non-zero
Plancherel measure. Therefore only $T_\mu$ with
$\mu\in\mathbb{R}^2\setminus V^0$ (generic representations) will
play a role in the Fourier transform. We proceed to their
construction as $T_{\breve k}=\mbox{\bf
Ind}^G_{\mathbb{R}^2}(e^{i\{\breve k,.\}})$ following \S 6.1 in
\cite{Folland199502}.

{\bf The Fourier transform at generic representations.} For each
$\breve k\in\mathbb{R}^2\setminus V^0$ the representation Hilbert
space $H_{\breve k}$ of $\nu_{\breve k}=e^{i\{\breve k,.\}}$ is
$H_{\breve k}=\mathbb{C}$. The homogeneous space $G/N=\mathbb{R}$
has a natural $G$-invariant measure, which is the Lebesgue measure
$dz$. The representation space of $T_\mu$ is then the completion
$L^2(\mathbb{R},\mathbb{C})$ of the space of compactly supported
continuous sections in the homogeneous Hermitian line bundle
$\mathbb{R}\times\mathbb{C}$, and the action of $G$ on it is given
by
$$
T_{\breve k}(g)f[z]=e^{-i\{\breve
k,(g^{-1}z)_N\}}f[(g^{-1}z)_H]=e^{i\{\breve k,F(-z)\breve
g\}}f[z-g_z]\mbox{, }g=(\breve g,g_z)\in G\mbox{, }f\in
C_0(\mathbb{R},\mathbb{C}),
$$
where for any $g\in G$ we write $g=g_Ng_H$, $g_N\in N$, $g_H\in
H$. For $f\in C_0(G)$ define the (harmonic analytical) Fourier
transform by
$$
\hat f(\pi)=\pi(f)D_\pi=\int_Gf(g)\pi(g)D_\pi dg,
$$
where the operator $D_\pi$ is defined on $\phi\in
C_0(\mathbb{R},\mathbb{C})$ by
$$
D_\pi\phi[z]=\Delta(z)^{+\frac{1}{2}}\phi[z]=(\det
F(z))^{-\frac{1}{2}}\phi[z].
$$
(Note that there is a misprint in the formula (7.49) of
\cite{Folland199502}, and the sign "$-$" in the power of $\Delta$
should be replaced by "$+$". The author of \cite{Folland199502}
confirmed this in a private communication.) By {\bf Theorem 7.50}
in \cite{Folland199502}, the operator fields $\hat f(\pi)$ are
measurable fields of Hilbert-Schmidt operators, and if we identify
the space of Hilbert-Schmidt operators on $H_\pi$ with the tensor
product space $H_\pi\otimes H_\pi^*$, then the Fourier transform
gives an isomorphism
$$
L^2(G)\sim\int_{\hat G}^\oplus d\nu(\pi)H_\pi\otimes H_\pi^*.
$$
To find the Plancherel measure $d\nu(\pi)$ we calculate the
Fourier transforms $\hat f(T_{\breve k})$ directly. For $\phi\in
C_0(\mathbb{R},\mathbb{C})$ we have
$$
\hat f(T_{\breve k})\phi[r]=\int_Gf(g)T_{\breve
k}(g)D_\pi\phi[r]dg
$$
$$
=\mathfrak{h}\int_{\mathbb{R}^3}dg_xdg_ydg_zf(g_x,g_y,g_z)e^{i\{\breve
k,F(-r)\breve g\}}(\det F(g_z))^{-1}(\det
F(r-g_z))^{-\frac{1}{2}}\phi[r-g_z]
$$
by a substitution $g_z'=r-g_z$
$$
=\mathfrak{h}\int_{\mathbb{R}^3}dg_xdg_ydg_z'f(g_x,g_y,r-g_z')e^{i\{\breve
k,F(-r)\breve g\}}(\det F(r-g_z'))^{-1}(\det
F(g_z'))^{-\frac{1}{2}}\phi[g_z']
$$
$$
=\mathfrak{h}\int_{\mathbb{R}^3}dg_xdg_ydg_z'f(g_x,g_y,r-g_z')e^{i\{F^\bot(r)\breve
k,\breve g\}}(\det F(r-g_z'))^{-1}(\det
F(g_z'))^{-\frac{1}{2}}\phi[g_z'].
$$
Thus $\hat f(T_{\breve k})$ is an integral operator with a smooth
kernel
$$
\mathcal{K}^f_{\breve
k}(r,g_z')=\mathfrak{h}\bar{\mathfrak{F}}_{\mathbb{R}^2}[f(.,.,r-g_z')](F^\bot(r)\breve
k)(\det F(r-g_z'))^{-1}(\det F(g_z'))^{-\frac{1}{2}},
$$
where
$$
\bar{\mathfrak{F}}_{\mathbb{R}^2}[\psi(.,.)](\breve
k)=\int_{\mathbb{R}^2}dxdy\psi(x,y)e^{i\{\breve k,\breve x\}}.
$$
The Hilbert-Schmidt norm $|\|\hat f(T_{\breve k})\||$ is given by
$$
|\|\hat f(T_{\breve
k})\||^2=\mathfrak{h}^2\int_{\mathbb{R}^2}drdg_z'|\mathcal{K}^f_{\breve
k}(r,g_z')|^2.
$$
Coming back to the original variable $g_z=r-g_z'$, we have
$$
|\|\hat f(T_{\breve
k})\||^2=\mathfrak{h}^2\int_{\mathbb{R}^2}drdg_z\left|\bar{\mathfrak{F}}_{\mathbb{R}^2}[f(.,.,g_z)](F^\bot(r)\breve
k)\right|^2(\det F(g_z))^{-2}(\det F(r-g_z))^{-1}
$$
$$
=\mathfrak{h}^2\int_{\mathbb{R}^2}drdg_z\left|\bar{\mathfrak{F}}_{\mathbb{R}^2}[f(.,.,g_z)](F^\bot(r)\breve
k)\right|^2(\det F(g_z))^{-1}(\det F(r))^{-1}
$$
(by Fubini's theorem)
$$
=\mathfrak{h}^2\int_\mathbb{R}dg_z(\det
F(g_z))^{-1}\int_\mathbb{R}dr\left|\bar{\mathfrak{F}}_{\mathbb{R}^2}[f(.,.,g_z)](F^\bot(r)\breve
k)\right|^2(\det F(r))^{-1}.
$$

{\bf The Plancherel formula.} Now we refer to the previous section
about the co-adjoint orbits, and note that in all cases
$\rho(k,r)=\dot\nu(k)(\det F(r))^{-1}$ with some continuous
non-negative function $\dot\nu(k)$ on $\tilde K$. We will shortly
see that
\begin{eqnarray}
d\nu(k)=\mathfrak{h}^{-1}\dot\nu(k)dk\label{dnu_k}
\end{eqnarray}
is exactly the Plancherel measure desired. Indeed,
$$
\int_{\tilde K}dk\mathfrak{h}^{-1}\dot\nu(k)|\|\hat f(T_{\breve
k_0(k)})\||^2=\mathfrak{h}\int_{\tilde
K}dk\dot\nu(k)\int_\mathbb{R}dg_z(\det F(g_z))^{-1}
$$
$$
\times\int_\mathbb{R}dr\left|\bar{\mathfrak{F}}_{\mathbb{R}^2}[f(.,.,g_z)](F^\bot(r)\breve
k_0(k))\right|^2(\det F(r))^{-1}
$$
by another application of Fubini's theorem (see
\cite{Dieudonne1976},chapter XIII),
$$
=\mathfrak{h}\int_\mathbb{R}dg_z(\det F(g_z))^{-1}\int_{\tilde
K}dk\int_\mathbb{R}dr\rho(k,r)\left|\bar{\mathfrak{F}}_{\mathbb{R}^2}[f(.,.,g_z)](F^\bot(r)\breve
k_0(k))\right|^2
$$
by definition of $\rho(k,r)$,
$$
=\mathfrak{h}\int_\mathbb{R}dg_z(\det
F(g_z))^{-1}\int_{\mathbb{R}^2}d\breve
k\left|\bar{\mathfrak{F}}_{\mathbb{R}^2}[f(.,.,g_z)](\breve
k)\right|^2
$$
by the Euclidean Parseval formula,
$$
=\mathfrak{h}\int_\mathbb{R}dg_z(\det
F(g_z))^{-1}\int_{\mathbb{R}^2}dg_xdg_y\left|f(g_x,g_y,g_z)\right|^2=\int_Gdg|f(g)|^2,
$$
thus we arrive at an explicit Plancherel forumla,
$$
\int_{\tilde K}d\nu(k)|\|\hat f(T_{\breve
k_0(k)})\||^2=\int_Gdg|f(g)|^2.
$$
The Plancherel measures for groups II-VII are thus given by
\begin{eqnarray}
\dot\nu_{II}(k)=|k|\mbox{, }\dot\nu_{III}(k)=1\mbox{,
}\dot\nu_{IV}(k)=1+k_1\mbox{, }\dot\nu_V(k)=1\mbox{,
}\dot\nu_{VI-}(k)=\cos^2(k)-q\sin^2(k)\nonumber\\
\dot\nu_{VI+}(q)=q^{k_2\bmod2}\mbox{,
}\dot\nu_{VII}(k)=|k|.\label{dotnu_k}
\end{eqnarray}
Note that we could have chosen the cross section for $VI$, $q<0$
in the same way as for $VI$, $q>0$ to get a uniform Plancherel
measure $\dot\nu_{VI}=\dot\nu_{VI_+}$ for all Bianchi VI groups,
but we preferred the more conventional circle to the quartet of
rays in Figure 1 when it was possible. This can be altered for any
technical purposes when needed.

\section{Scalar spectral analysis on Bianchi I-VII groups\label{SecSpecScal}}

Here the term scalar spectral analysis is understood as the
spectral theory of the scalar Laplacian. Of course, there is no
distinguished Laplacian on these groups. We will consider {\it any
Laplacian} which arises as the metric operator with respect to any
conserved metric on the group.\index{Laplace operator}

Let $G$ be one of these groups, and let $\mathfrak{L}(G)$ be its
Lie algebra generated by three right invariant vector fields
$\xi_1,\xi_2,\xi_3$. Let further $X_1,X_2,X_3$ be a basis of left
invariant vector fields on $G$, and
$d\omega^1,d\omega^2,d\omega^3$ the dual basis. Any left invariant
metric $h_{ab}$ on $G$ can be written as
$$
h_{ab}=\sum_{i,j=1}^3\check h_{ij}d\omega^i_ad\omega^j_b,
$$
where $\check h_{ij}$ is any symmetric positive definite
$3\times3$ matrix, and the corresponding metric Laplacian will be
\begin{eqnarray}
\mathbf{\Delta}_h=\sum_{i,j=1}^3\check
h^{ij}X_iX_j,\label{LaplaceDef}
\end{eqnarray}
with $\check h^{ij}=(\check h_{ij})^{-1}$. To see this first note
that
$$
\sum_{i,j=1}^3\check h^{ij}X_iX_jf=\sum_{i,j=1}^3\check
h^{ij}\sum_{l,m=1}^3[X_i^lX_j^m\partial_l\partial_m+X_i^l(\partial_lX_j^m)\partial_m]f.
$$
On the other hand the connection Laplacian related to the
Levi-Civita connection is given by
$$
\mathbf{\Delta}_h=\sum_{i,j=1}^3\check
h^{ij}\sum_{l,m=1}^3[X_i^lX_j^m\partial_l\partial_m-\sum_{k=1}^3X_i^lX_j^m{\bf\Gamma}^k_{lm}\partial_k],
$$
where the ${\bf\Gamma}^k_{lm}$ are the Christoffel symbols. This
together with the observation
$$
\sum_{l,m=1}^3X_i^lX_j^m{\bf\Gamma}^k_{lm}=-\frac{1}{2}(X_i^l\partial_lX_j^m+X_j^l\partial_lX_i^m),
$$
which follows from $\nabla_{X_i}X_j=\frac{1}{2}[X_i,X_j]$, gives
(\ref{LaplaceDef}). Our aim will be to find the eigenfunctions and
the spectrum of $\mathbf{\Delta}_h$. If $\xi_1$ and $\xi_2$
commute and also commute with all $X_i$, then
$\xi_1,\xi_2,\mathbf{\Delta}_h$ are a triple of commuting
operators, and have common eigenfunctions. We will find those
eigenfunctions and show that they are complete in the sense we
desire. For the ease of notation let us denote
$$
\check h^{2\times2}=\check h^{ij}|_{i,j<3},
$$
\begin{eqnarray}
\check h^{\bullet3}=\check h^{ij}|_{i<j=3},\qquad\check
h^{3\bullet}=\check h^{ij}|_{j<i=3}.\label{MatrixNot}
\end{eqnarray}

First let us describe the spectrum $\Spec(\mathbf{\Delta}_h)$ of
the Laplacian $\mathbf{\Delta}_h$. We note that
$\mathbf{\Delta}_h$ is a negative semidefinite operator, as
$(\mathbf{\Delta}_hf,f)_{L^2(G)}=-(d_hf,d_hf)_{L^2(G)}\le 0$,
where $d_h$ is the exterior derivative with respect to the metric
$h_{ab}$. Thus $\mathbf{\Delta}_h$ is a semibounded and real
symmetric operator on $L^2(G)$. There are several ways of
extending $\mathbf{\Delta}_h$ to a self-adjoint operator on
$L^2(G)$. A real symmetric operator has a self-adjoint extension
by von Neumann's theorem \cite{Reed1975}. A semibounded symmetric
operator has a self-adjoint extension by Friedrich's extension
theorem \cite{Reed1975}. But we have something stronger. The Lie
group $G$ with its left invariant Riemannian metric $h_{ab}$ is a
complete Riemannian manifold \cite{Milnor1976}. Then following
\cite{Chernoff1973} $\mathbf{\Delta}_h$ is essentially
self-adjoint on $C_0^\infty(G)$. Being a negative self-adjoint
operator $\mathbf{\Delta}_h$ has a real non-positive spectrum,
$\Spec(\mathbf{\Delta}_h)\subset(-\infty;0]$. The semidirect
structure of our groups satisfies the conditions of Lemma 5.6 of
\cite{Milnor1976}, and we have for the scalar curvature $R_h$ the
following formula,
\begin{eqnarray}
R_h=-Tr[S^2]-(Tr[S])^2,\label{R_h}
\end{eqnarray}
where we took into account that the normal Lie subgroup
$\mathbb{R}^2$ with the induced metric is flat. The matrix $S$ is
given by
$$
S=\frac{1}{2}(\ad_{(0,0,1)}|_{\mathbb{R}^2}+\ad_{(0,0,1)}|_{\mathbb{R}^2}^*)=\frac{1}{2}(f(1)+f(1)^*),
$$
where the adjoint $^*$ is understood as
$$
h(Af,g)=h(f,A^*g)\mbox{, }\forall f,g\in \mathfrak{L}(G)\mbox{,
}A\in\Aut(\mathfrak{L}(G)).
$$
Here $h(f,g)$ for $f,g\in\mathfrak{L}(G)$ means the evaluation of
the Riemannian metric $h$ on the vector fields $f$,$g$. Thus all
our groups endowed with any left invariant Riemannian metric are
spaces of constant negative curvature equal to $R_h$, which is
given explicitly in terms of the matrices $f(1)=M$ and $\check
h^{2\times2}$ with
\begin{eqnarray}
S=\frac{1}{2}(M+(\check h^{2\times2})^{-1}M\check
h^{2\times2}).\label{S_Mh}
\end{eqnarray}
This in turn implies, following \cite{Donnelly1981}, that the
essential spectrum of $\mathbf{\Delta}_h$ is precisely
$\EssSpec(\mathbf{\Delta}_h)=(-\infty;R_h]$. Recall that the
essential spectrum of a self adjoint operator consists of
eigenvalues of infinite multiplicity (see \cite{Reed1975}). For
the group Bianchi I, all irreps are 1-dimensional, and as we will
see later in the section, each eigenspace representation includes
an infinite number of them, thus there is no discrete spectrum.
For the remaining groups, we have seen in the previous section
that no finite dimensional representation enters the Plancherel
formula. On the other hand, in the next section we will see that
the infinite dimensional eigenspaces exhaust $L^2(G)$, hence no
finite dimensional eigenspace exists, i.e., the discrete spectrum
is empty, and therefore
$\Spec(\mathbf{\Delta}_h)=\EssSpec(\mathbf{\Delta}_h)$.

To find the generators $\xi_i$ for Bianchi I-VII groups we
differentiate the left translation map $\vec x\mapsto g\vec x$,
$$
g(x,y,z)=((g_x,g_y)+F(g_z)(x,y),g_z+z),
$$
and obtain
$$
\begin{pmatrix}
\xi_1\\
\xi_2\\
\xi_3
\end{pmatrix}=
\begin{pmatrix}
1 & 0 & 0\\
0 & 1 & 0\\
(x, & y)\dot F^\top(0)& 1
\end{pmatrix}
\begin{pmatrix}
\partial_x\\
\partial_y\\
\partial_z
\end{pmatrix}.
$$
We see that $\xi_1=\partial_x$ and $\xi_2=\partial_y$ do indeed
commute. To find the left invariant vectors $X_i$ (which are the
generators of right translations) we differentiate the right
translation map $\vec x\mapsto\vec xg$,
$$
(x,y,z)g=((x,y)+F(z)(g_x,g_y),z+g_z),
$$
and get
\begin{eqnarray}
\begin{pmatrix}
X_1\\
X_2\\
X_3
\end{pmatrix}=
\begin{pmatrix}
F^\top(z) & \begin{matrix}0 \\ 0\end{matrix}\\
\begin{matrix}0 & 0\end{matrix} & 1
\end{pmatrix}
\begin{pmatrix}
\partial_x\\
\partial_y\\
\partial_z
\end{pmatrix}.\label{X_iExplicit}
\end{eqnarray}
Thus $\xi_1$, $\xi_2$ do commute with all $X_i$. Now let
$\zeta(\vec x)\in C^\infty(G)$ be a joint eigenfunction for
$\{\xi_1,\xi_2,\mathbf{\Delta}_h\}$. Then it is necessarily of the
form
$$
\zeta(\vec x)=e^{i\{\breve k_\mathbb{C},\breve x\}}P(z),
$$
where $\breve k_\mathbb{C}\in\mathbb{C}^2$, $\breve x=(x,y)$, and
satisfies
$$
\mathbf{\Delta}_h\zeta(\vec x)=\lambda\zeta(\vec x),
$$
for some $\lambda\in\mathbb{C}$. A matrix representation of
equation (\ref{LaplaceDef}) and a bit of manipulation yields the
following equation
$$
\check h^{33}\ddot P(z)+i(\breve k_\mathbb{C}^\top F(z)[\check
h^{\bullet3}+(\check h^{3\bullet})^\top])\dot P(z)
$$
$$
-(\lambda+\breve k_\mathbb{C}^\top F(z)\check
h^{2\times2}F^\top(z)\breve k_\mathbb{C}-i\check
h^{3\bullet}F^\top(z)M^\top\breve k_\mathbb{C})P(z)=0,
$$
where $\dot F^\top(z)=\partial_z e^{zM^\top}=F^\top(z)M^\top$ was
used. This is a generalized time-dependent harmonic oscillator
equation, which always have solutions, and those solutions
comprise a two complex dimensional space. For given $\lambda$ and
$\breve k_\mathbb{C}$ let us choose two linearly independent
solutions $P_{\lambda,k_\mathbb{C}}(z)$ and
$Q_{\lambda,k_\mathbb{C}}(z)$ (the choice of initial data may be
arbitrary).

First we consider the group Bianchi I. Here $M=0$, $F(z)=1$ and
the equation becomes
$$
\check h^{33}\ddot P(z)+i(\breve k_\mathbb{C}^\top[\check
h^{\bullet3}+(\check h^{3\bullet})^\top])\dot P(z)-(\lambda+\breve
k_\mathbb{C}^\top\check h^{2\times2}\breve k_\mathbb{C})P(z)=0.
$$
One can easily check that $P(z)=e^{ik_z\cdot z}$ is a solution if
$$
\lambda=-\vec k_\mathbb{C}^\top\check h^{ij}\vec k_\mathbb{C},
$$
where $\vec k_\mathbb{C}=(\breve k_\mathbb{C},k_z)$. This is a
consequence of the fact that for this group $\xi_3$ also commutes
with all $\xi_i$ and $X_i$, so that there exist joint
eigenfunctions of the commuting operators
$\xi_1,\xi_2,\xi_3,\mathbf{\Delta}_h$ of the form
$$
\zeta(\vec x)=e^{i\{\vec k_\mathbb{C},\vec x\}},
$$
corresponding to the eigenvalues
$$
\lambda=-\vec k_\mathbb{C}^\top\check h^{ij}\vec k_\mathbb{C}.
$$
In particular, when we restrict ourselves to the irreducibles
$\vec k_\mathbb{C}=\vec k\in\mathbb{R}^3$, we obtain
$$
\zeta_{\vec k}(\vec x)=e^{i\{\vec k,\vec x\}},
$$
and we observe immediately that each eigenspace corresponding to
the eigenvalue $\lambda$ includes infinitely many $\vec k$ which
satisfy
$$
\lambda=-\vec k^\top\check h^{ij}\vec k.
$$
Of course, $e^{ik_z\cdot z}$ do not exhaust all solutions $P(z)$.
But it turns out that the $\zeta_{\vec k}$ constructed in this way
are already complete in $L^2(G)$. Indeed, that is the essence of
the Euclidean Parseval equality. To be more precise, we need to
take $d\nu(\vec k)=\frac{1}{\mathfrak{h}}d\vec k$ as the
Plancherel measure for the Euclidean Plancherel formula to hold.
Equivalently we can renormalize $\zeta_{\vec k}$ by taking
$$
\zeta_{\vec k}(\vec x)=\frac{1}{\sqrt{\mathfrak{h}}}e^{i\{\vec
k,\vec x\}}
$$
so that the Plancherel measure is independent of the metric. But
this ease of construction is a peculiarity which the remaining
groups Bianchi II-VII do not share, and we proceed to determine
their eigenfunctions.

For the groups II-VII let us now restrict to
$0>\lambda\in\mathbb{R}$ and $\breve
k_\mathbb{C}=F^\bot(r)k_0(-k)\in\mathbb{R}^2\setminus V^0$,
$k\in\mathfrak{K}$, $r\in\mathbb{R}$ (minus sign for convenience).
The equation now becomes
\begin{eqnarray}
\check h^{33}\ddot P(z)+i(\breve k_0(-k)^\top F(z-r)[\check
h^{\bullet3}+(\check h^{3\bullet})^\top])\dot
P(z)\nonumber\\
-(\lambda+\breve k_0(-k)^\top F(z-r)\check
h^{2\times2}F^\top(z-r)\breve k_0(-k)\nonumber\\
-i\check h^{3\bullet}F^\top(z-r)M^\top\breve
k_0(-k))P(z)=0,\label{EqPr}
\end{eqnarray}
and the two independent solutions will be denoted by
$P_{\lambda,k,r}(z)$ and $Q_{\lambda,k,r}(z)$. If we set
$P_{\lambda,k,0}(z)=P_{\lambda,k}(z)$,
$Q_{\lambda,k,0}(z)=Q_{\lambda,k}(z)$, then a variable
substitution $z-r\mapsto z$ shows that we can choose
$P_{\lambda,k,r}(z)=P_{\lambda,k}(z-r)$,
$Q_{\lambda,k,r}(z)=Q_{\lambda,k}(z-r)$. Another point that can be
noticed in equation (\ref{EqPr}) by taking the complex conjugate
is that we can choose $P_{\lambda,-k}(z)=\bar P_{\lambda,k}(z)$,
$Q_{\lambda,-k}(z)=\bar Q_{\lambda,k}(z)$. Finally we construct
the eigenfunctions
\begin{eqnarray}
\zeta_{k,\lambda,r,s}(\vec x)=(\det F(-r))e^{i\{F^\bot(r)\breve
k_0(-k),\breve x\}}P_{\lambda,k,s}(z-r),\label{Eigenfunc}
\end{eqnarray}
where to $s=1$ ($-1$) corresponds $P_{\lambda,k,s}=P_{\lambda,k}$
($Q_{\lambda,k}$). Note that each $\zeta_{k,\lambda,r,s}$ enters
with its conjugate,
$\bar\zeta_{k,\lambda,r,s}=\zeta_{-k,\lambda,r,s}$. As we will see
in the next section, $P_{\lambda,k,s}$ are orthogonal with respect
to the weight $\det F(-z)$, which shows that
$\zeta_{k,\lambda,r,s}$ just defined are orthogonal with respect
to the same weight. Again, instead of using the Plancherel measure
(\ref{dnu_k}) we can use $d\nu(k)=\dot\nu(k)dk$ and renormalize
according to
$$
\zeta_{k,\lambda,r,s}(\vec x)=\frac{1}{\sqrt{\mathfrak{h}}}(\det
F(-r))e^{i\{F^\bot(r)\breve k_0(-k),\breve
x\}}P_{\lambda,k,s}(z-r).
$$
Note that by (\ref{X_iExplicit}) the number $\mathfrak{h}$ is just
$\sqrt{\det\check h_{ij}}$.

\section{Fourier transform on Bianchi II-VII groups \label{Section_Fourier}}

As a first step on the way of establishing the completeness of
$\{\zeta_{k,\lambda,r,s}\}$ we prove a simple proposition.
Consider the differential operator
$$
D_{\breve k}=\check h^{33}\frac{d^2}{dz^2}+i(\breve k^\top
F(z)[\check h^{\bullet3}+(\check
h^{3\bullet})^\top])\frac{d}{dz}-(\breve k^\top F(z)\check
h^{2\times2}F^\top(z)\breve k-i\check
h^{3\bullet}F^\top(z)M^\top\breve k),
$$
$$
\breve k\in\mathbb{R}^2\setminus V^0,
$$
which by definition satisfies
$$
D_{\breve k}f(z)=e^{-i\{\breve k,\breve
x\}}\mathbf{\Delta}_h\left[e^{i\{\breve k,\breve
x\}}f(z)\right]\mbox{, }f\in C^\infty_0(\mathbb{R}).
$$

\begin{proposition} The operator $D_{\breve k}$ with domain $C^\infty_0(\mathbb{R})$ is symmetric in $L^2(\mathbb{R},\det F(-z)dz)$, for any $\breve
k\in\mathbb{R}^2\setminus V^0$.
\end{proposition}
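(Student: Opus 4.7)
The plan is to verify $\langle D_{\breve k}f,g\rangle_w = \langle f,D_{\breve k}g\rangle_w$ for all $f,g\in C_0^\infty(\mathbb{R})$, where $w(z) = \det F(-z)$, by direct integration by parts. The compact support of $f$ and $g$ eliminates all boundary terms, so the claim reduces to an algebraic matching of coefficients.

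I would decompose $D_{\breve k}$ into its principal symbol $\check h^{33}\partial_z^2$, an imaginary first-order part $i\gamma(z)\partial_z$ with $\gamma(z) := \breve k^\top F(z)[\check h^{\bullet 3} + (\check h^{3\bullet})^\top]$ real, and a zeroth-order part which splits into the real multiplication $-\epsilon(z) = -\breve k^\top F(z)\check h^{2\times 2} F^\top(z)\breve k$ (manifestly symmetric in any weighted $L^2$) and the imaginary multiplication $i\delta(z)$ with $\delta(z) := \check h^{3\bullet} F^\top(z) M^\top\breve k$ real. Recasting the principal symbol in Sturm-Liouville form, $\check h^{33}\partial_z^2 = w^{-1}\partial_z(w\check h^{33}\partial_z) - \check h^{33}(w'/w)\partial_z$, isolates its manifestly self-adjoint part and leaves a residual first-order correction driven by $w'/w = -\mathrm{tr}(M)$, which follows from $w(z) = e^{-z\,\mathrm{tr}(M)}$.

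The crucial algebraic identity is $\gamma'(z) = 2\delta(z)$. It follows from $\partial_z F(z) = MF(z) = F(z)M$ (since $M$ commutes with its own exponential) together with the symmetry $(\check h^{3\bullet})^\top = \check h^{\bullet 3}$: transposing the scalar $\delta$ gives $\delta = \breve k^\top M F(z) \check h^{\bullet 3} = \breve k^\top F(z) M \check h^{\bullet 3} = \frac{1}{2}\gamma'$. Integration by parts on the imaginary first-order term produces $i\int\bar f g\,(\gamma w)'\,dz$, and the expansion $(\gamma w)' = \gamma' w + \gamma w' = 2\delta\,w - \mathrm{tr}(M)\gamma\,w$ is precisely what is needed to cancel the skew-adjoint $-2i\int\delta\,\bar f g\,w\,dz$ contribution from $i\delta$ and to balance the Sturm-Liouville residual.

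The remaining step is to assemble everything and verify that the first-order residual from the Sturm-Liouville rewriting, the integration-by-parts products from the second-order piece, and the cross-terms from the first-order imaginary part all conspire to vanish. The main obstacle is purely bookkeeping: tracking signs and real-versus-imaginary parts through repeated integration by parts. Once the identity $\gamma' = 2\delta$ is recognized as the function-space shadow of the commutation $[M, F(z)] = 0$, the symmetry is essentially algebraic.
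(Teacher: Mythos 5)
Your strategy --- a direct one-dimensional integration by parts organized around the identity $\gamma'(z)=2\delta(z)$ --- is genuinely different from the paper's, which lifts the problem back to the tube $D^1\times\mathbb{R}\subset G$ and invokes Green's identity for $\mathbf{\Delta}_h$ plus the vanishing of a circle boundary integral. Your identity $\gamma'=2\delta$ is correct and is the right structural observation. But the final step fails: the residual terms do \emph{not} conspire to vanish, and doing your own bookkeeping honestly refutes the asserted cancellation. Writing $w=\det F(-z)=e^{-z\,\mathrm{tr}M}$ and $\langle u,v\rangle_w=\int w\,u\bar v\,dz$, the defect of the imaginary parts is
$$
\langle i\gamma f'+i\delta f,g\rangle_w-\langle f,i\gamma g'+i\delta g\rangle_w=-i\int(\gamma w)'f\bar g\,dz+2i\int w\,\delta f\bar g\,dz=i\,\mathrm{tr}(M)\int w\,\gamma f\bar g\,dz;
$$
the $2i\delta$ contributions cancel (that is $\gamma'=2\delta$ at work), but a multiplication operator $i\,\mathrm{tr}(M)\gamma$ survives, while the Sturm--Liouville residual $\check h^{33}\mathrm{tr}(M)\partial_z$ is a genuine first-order operator --- a first-order operator cannot cancel a multiplication operator. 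Altogether the formal-adjoint defect is $D_{\breve k}-D_{\breve k}^\dagger=\mathrm{tr}(M)\bigl[2\check h^{33}\partial_z+i\gamma-\check h^{33}\mathrm{tr}(M)\bigr]$, nonzero whenever $\mathrm{tr}M\neq0$. Concretely, for Bianchi V with $\check h=\mathbf{1}$ one has $D_{\breve k}=\partial_z^2-e^{2z}|\breve k|^2$ and $w=e^{-2z}$, and $\int e^{-2z}(\bar gf''-\bar g''f)\,dz=2\int e^{-2z}(\bar gf'-\bar g'f)\,dz$, which is $-2\int e^{-2z}\eta^2\,dz\neq0$ for $f=\eta$, $g=z\eta$.

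What your computation actually proves is that $D_{\breve k}$, exactly as displayed, is symmetric in the \emph{unweighted} space $L^2(\mathbb{R},dz)$: with $w\equiv1$ the defect of $i\gamma\partial_z$ is $-i\int\gamma'f\bar g=-2i\int\delta f\bar g$, cancelling the defect $+2i\int\delta f\bar g$ of $i\delta$, and the remaining pieces are manifestly symmetric. The mismatch with the stated weight is not an artifact of your method: $\mathbf{\Delta}_h=\sum\check h^{ij}X_iX_j$ built from left-invariant fields is formally symmetric with respect to the right Haar measure $dx\,dy\,dz$, not the left Haar/Riemannian measure $\det F(-z)\,dx\,dy\,dz$, because for the non-unimodular groups ($\mathrm{tr}M\neq0$) it differs from the Laplace--Beltrami operator by the first-order term $\mathrm{tr}(M)\sum_j\check h^{j3}X_j$ (the relation $\nabla_{X_i}X_j=\frac{1}{2}[X_i,X_j]$ invoked in Section \ref{SecSpecScal} holds only for bi-invariant metrics). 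The paper's proof obscures this by inserting the factor $\det F(-z)$ when converting $\int_{D^1\times\mathbb{R}}d\vec x(\cdots)$ into a weighted one-dimensional integral. So either prove symmetry with respect to $dz$, or replace $D_{\breve k}$ by the conjugate of the true Laplace--Beltrami operator, $D_{\breve k}-\mathrm{tr}(M)\bigl(\check h^{33}\frac{d}{dz}+i\,\check h^{3\bullet}F^\top(z)\breve k\bigr)$, for which your Sturm--Liouville rewriting does close in $L^2(\mathbb{R},\det F(-z)\,dz)$.
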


\begin{proof}
Let us first write Green's identity for the operator
$\mathbf{\Delta}_h$ on the infinite tube
$D^1\times\mathbb{R}\subset G$ where $D^1$ is the unit disk in the
$\breve x$-plane,
$$
\int_{D^1\times\mathbb{R}}d\vec x\left(e^{-i\{\breve k,\breve
x\}}\bar g(z)\mathbf{\Delta}_h\left[e^{i\{\breve k,\breve
x\}}f(z)\right]-\mathbf{\Delta}_h\left[e^{-i\{\breve k,\breve
x\}}\bar g(z)\right]e^{i\{\breve k,\breve x\}}f(z)\right)
$$
$$
=\int_{S^1\times\mathbb{R}}dzdl(\breve x)\left(e^{-i\{\breve
k,\breve x\}}\bar g(z)(\breve x,\frac{\partial}{\partial\breve
x})\left[e^{i\{\breve k,\breve x\}}f(z)\right]-(\breve
x,\frac{\partial}{\partial\breve x})\left[e^{-i\{\breve k,\breve
x\}}\bar g(z)\right]e^{i\{\breve k,\breve x\}}f(z)\right)
$$
$$
=\int_{S^1\times\mathbb{R}}dzdl(\breve x)2i\bar g(z)f(z)(\breve
x,\breve k)=0.
$$
Next we note that
$$
\int_{D^1\times\mathbb{R}}d\vec x\left(e^{-i\{\breve k,\breve
x\}}\bar g(z)\mathbf{\Delta}_h\left[e^{i\{\breve k,\breve
x\}}f(z)\right]-\mathbf{\Delta}_h\left[e^{-i\{\breve k,\breve
x\}}\bar g(z)\right]e^{i\{\breve k,\breve x\}}f(z)\right)
$$
$$
=\int_{D^1\times\mathbb{R}}dxdydz(\det F(-z))\left(\bar
g(z)D_{\breve k}f(z)-\bar D_{\breve k}[\bar g(z)]f(z)\right)
$$
$$
=\pi\int_\mathbb{R}dz(\det F(-z))\left(\bar g(z)D_{\breve
k}f(z)-\bar D_{\breve k}[\bar g(z)]f(z)\right)=0,
$$
which holds on the dense subset of all $f,g$ in
$C^\infty_0(\mathbb{R})$ inside $L^2(\mathbb{R},\det F(-z)dz)$,
and symmetry is thus proven.$\Box$
\end{proof}

Now from the definition it is clear that $D_{\breve k}$ is a
negative definite operator (because $\mathbf{\Delta}_h$ is such),
and is hence upper semibounded, and has a self-adjoint extension
in $L^2(\mathbb{R},\det F(-z)dz)$ by Friedrichs extension theorem
\cite{Reed1975}. In particular, for $\breve k=\breve k_0(-k)$,
$k\in\mathfrak{K}$, the generalized eigenfunctions
$\{P_{\lambda,k,s}\}_{\lambda\in Sp(\mathbf{\Delta}_h),s=\pm1}$
are complete and give rise to a Fourier transform
$\mathfrak{F}_{\breve k_0(-k)}$ on $L^2(\mathbb{R},\det F(-z)dz)$
by means of an abstract eigenfunction expansion.
$\mathfrak{F}_{\breve k_0(-k)}$ is given by
$$
(\mathfrak{F}_{\breve k_0(-k)}f)(\lambda,s)=\int_\mathbb{R}dz(\det
F(-z))\bar P_{\lambda,k,s}(z)f(z).
$$
Define now the linear isomorphism $\mathfrak{V}:L^2(\mathbb{R})\to
L^2(\mathbb{R},\det F(-z)dz)$ by
$$
f(z)=[\mathfrak{V}\phi](z)=\phi(-z)(\det F(z))^{\frac{1}{2}}.
$$
This induces a Fourier transform
$\mathfrak{F}_k=\mathfrak{F}_{\breve k_0(-k)}\mathfrak{V}$ which
acts as
$$
(\mathfrak{F}_k\phi)(\lambda,s)=\int_\mathbb{R}dz(\det
F(z))^{\frac{1}{2}}P_{\lambda,k,s}(-z)\phi(z)\doteq\tilde\phi(\lambda,k,s).
$$
The inversion formula is given by
$$
\phi(z)=(\det
F(z))^{\frac{1}{2}}\sum_{s=\pm1}\int_{Sp(\mathbf{\Delta}_h)}d\lambda\tilde\phi(\lambda,k,s)\bar
P_{\lambda,k,s}(-z).
$$
Now we are in the position to show how $\zeta_{k,\lambda,r,s}$ are
related to the irreducible representations $T_{\breve k_0(k)}$.
Consider the following transformation on $f\in C^\infty_0(G)$,
$$
\tilde f(k,\lambda,r,s)=\int_Gdg\bar\zeta_{k,\lambda,r,s}(g)f(g),
$$
with eigenfunctions $\zeta_{k,\lambda,r,s}$ defined in {\bf
Section \ref{SecSpecScal}}. We will see that $\tilde
f(k,\lambda,r,s)$ are in some sense proportional to the matrix
columns of the operators $\hat f(T_{\breve k_0(k)})$. First we see
that
$$
\tilde f(k,\lambda,r,s)=\mathfrak{h}(\det
F(-r))\int_{\mathbb{R}^3}dxdydz(\det F(-z))
$$
$$
\times f(x,y,z)e^{i(F^\bot(r)\breve k_0(k),\breve x)}\bar
P_{\lambda,k,s}(-(r-z))
$$
$$
=\mathfrak{h}(\det F(-r))\int_{\mathbb{R}^3}dxdydz(\det
F(-z))(\det F(r-z))^{-\frac{1}{2}}f(x,y,z)e^{i(F^\bot(r)\breve
k_0(k),\breve x)}
$$
$$
\times(\det F(r-z))^{\frac{1}{2}}\bar P_{\lambda,k,s}(-(r-z)).
$$
Next we recognize that this is related to the extension of the
operator $\hat f(T_{\breve k_0(k)})$ from $L^2(\mathbb{R})$ to
$C^\infty(\mathbb{R})$,
$$
\tilde f(k,\lambda,r,s)=(\det F(-r))\hat f(T_{\breve
k_0(k)})\left[(\det F(r))^{\frac{1}{2}}\bar
P_{\lambda,k,s}(-r)\right].
$$
Integrating we obtain
\begin{eqnarray}
\sum_{s=\pm1}\int_{Sp(\mathbf{\Delta}_h)}d\lambda\tilde
f(k,\lambda,r,s)\tilde\phi(\lambda,k,s)=(\det F(-r))\hat
f(T_{\breve k_0(k)})\phi[r].\label{IrrepTrans}
\end{eqnarray}
Recall now the Fourier inversion formula as given in
\cite{Currey2005} (notation there is different, and we have
adapted them to ours adopted from \cite{Folland199502}),
\begin{eqnarray}
f(1)=\int_{\mathfrak{K}}d\nu(k)Tr\left[D_\pi\hat f(T_{\breve
k_0(k)})\right].\label{FourierInverse}
\end{eqnarray}
Formally a matrix element of $D_\pi\hat f(T_{\breve k_0(k)})$
would be an expression
$$
\left((\det F(z))^{\frac{1}{2}}\bar
P_{\lambda',k,s'}(-z),D_\pi\hat f(T_{\breve k_0(k)})(\det
F(z))^{\frac{1}{2}}\bar
P_{\lambda,k,s}(-z)\right)_{L^2(\mathbb{R})}
$$
$$
=\int_\mathbb{R}dz(\det F(z))P_{\lambda',k,s'}(-z)\tilde
f(k,\lambda,z,s),
$$
which does not make sense in precise terms. However, the trace of
such elements,
$$
\sum_{s=\pm1}\int_{Sp(\mathbf{\Delta}_h)}d\lambda\int_\mathbb{R}dz(\det
F(z))P_{\lambda,k,s}(-z)\tilde f(k,\lambda,z,s),
$$
can be given an exact sense if we change the order of integration,
$$
\int_\mathbb{R}dz\sum_{s=\pm1}\int_{Sp(\mathbf{\Delta}_h)}d\lambda
(\det F(z))P_{\lambda,k,s}(-z)\tilde f(k,\lambda,z,s).
$$
Indeed, let $\{p_n(z)\}$ be an orthonormal system in
$L^2(\mathbb{R})$. Consider the Fourier transforms $\tilde
p_n(\lambda,k,s)$, and consider the following bi-distribution in
the Fourier space,
$\sum_{n=1}^\infty\overline{\widetilde{p_n}}(\lambda,k,s)\widetilde{p_n}(\lambda',k,s')$.
Let $\tilde f,\tilde g$ be the Fourier transforms of arbitrary
$f,g\in L^2(\mathbb{R})$. We have
$$
\sum_{n=1}^\infty\sum_{s=\pm1}\sum_{s'=\pm1}\int_{Sp(\mathbf{\Delta}_h)}d\lambda\int_{Sp(\mathbf{\Delta}_h)}d\lambda'\overline{\widetilde{p_n}}(\lambda,k,s)\widetilde{p_n}(\lambda',k,s')\tilde
f(\lambda,k,s)\bar{\tilde{g}}(\lambda',k,s')
$$
$$
=\sum_{n=1}^\infty\left(\sum_{s=\pm1}\int_{Sp(\mathbf{\Delta}_h)}d\lambda\overline{\widetilde{p_n}}(\lambda,k,s)\tilde
f(\lambda,k,s)\right)\left(\sum_{s'=\pm1}\int_{Sp(\mathbf{\Delta}_h)}d\lambda'\widetilde{p_n}(\lambda',k,s')\bar{\tilde{g}}(\lambda',k,s')\right)
$$
$$
=\sum_{n=1}^\infty(p_n,f)_{L^2(\mathbb{R})}(g,p_n)_{L^2(\mathbb{R})}=(g,f)_{L^2(\mathbb{R})}=\sum_{s=\pm1}\int_{Sp(\mathbf{\Delta}_h)}d\lambda\tilde
f(\lambda,k,s)\bar{\tilde{g}}(\lambda,k,s),
$$
thus
$\sum_{n=1}^\infty\overline{\widetilde{p_n}}(\lambda,k,s)\widetilde{p_n}(\lambda',k,s')=\delta(\lambda-\lambda')\delta^s_{s'}$.
Now
$$
\int_\mathbb{R}dz\sum_{s=\pm1}\int_{Sp(\mathbf{\Delta}_h)}d\lambda
(\det F(z))P_{\lambda,k,s}(-z)\tilde f(k,\lambda,z,s)
$$
$$
=\int_\mathbb{R}dz(\det
F(z))\sum_{n=1}^\infty\sum_{s,s'}\underset{Sp(\mathbf{\Delta}_h)^2}{\int\!\int}
d\lambda
d\lambda'\overline{\widetilde{p_n}}(\lambda,k,s)\widetilde{p_n}(\lambda',k,s')P_{\lambda,k,s}(-z)\tilde
f(k,\lambda',z,s')
$$
$$
=\int_\mathbb{R}dz(\det
F(z))\sum_{n=1}^\infty\overline{\left(\sum_{s=\pm1}\int\limits_{Sp(\mathbf{\Delta}_h)}d\lambda\widetilde{p_n}(\lambda,k,s)\bar
P_{\lambda,k,s}(-z)\right)}
$$
$$
\times\left(\sum_{s'=\pm1}\int\limits_{Sp(\mathbf{\Delta}_h)}d\lambda'\tilde
f(k,\lambda',z,s')\widetilde{p_n}(\lambda',k,s')\right)
$$
using (\ref{IrrepTrans}),
$$
=\int_\mathbb{R}dz(\det F(z))\sum_{n=1}^\infty(\det
F(z))^{-\frac{1}{2}}\overline{p_n}(z)(\det F(-z))\hat f(T_{\breve
k_0(k)})p_n(z)
$$
as both the sum and the integral converge in $L^2$,
$$
=\sum_{n=1}^\infty\int_\mathbb{R}dz\overline{p_n}(z)(\det
F(z))^{-\frac{1}{2}}\hat f(T_{\breve
k_0(k)})p_n(z)=\sum_{n=1}^\infty\int_\mathbb{R}dz\overline{p_n}(z)D_\pi\hat
f(T_{\breve k_0(k)})p_n(z)
$$
$$
=\sum_{n=1}^\infty(p_n,D_\pi\hat f(T_{\breve
k_0(k)})p_n)_{L^2(\mathbb{R})}=Tr\left[D_\pi\hat f(T_{\breve
k_0(k)})\right].
$$
Hence from (\ref{FourierInverse}) we have
$$
f(1)=\int_{\mathfrak{K}}d\nu(k)\int_\mathbb{R}dz\sum_{s=\pm1}\int_{Sp(\mathbf{\Delta}_h)}d\lambda
(\det F(z))P_{\lambda,k,s}(-z)\tilde f(k,\lambda,z,s).
$$
To find an inversion formula at an arbitrary point $g\in G$ we
apply this to the left translated function
$[L_{g^{-1}}f](x)=f(gx)$,
$$
f(g)=[L_{g^{-1}}f](1)=\int_{\mathfrak{K}}d\nu(k)\int_\mathbb{R}dz\sum_{s=\pm1}\int_{Sp(\mathbf{\Delta}_h)}d\lambda
(\det
F(z))P_{\lambda,k,s}(-z)\widetilde{[L_{g^{-1}}f]}(\lambda,k,z,s).
$$
But from the definition
$$
\widetilde{[L_{g^{-1}}f]}(\lambda,k,r,s)=\int_Gdh\bar\zeta_{k,\lambda,r,s}(h)[L_{g^{-1}}f](h)=\int_Gdh'\bar\zeta_{k,\lambda,r,s}(g^{-1}h')f(h').
$$
From the definition of $\zeta_{k,\lambda,r,s}$ we find
$$
\bar\zeta_{k,\lambda,r,s}(g^{-1}h')=e^{-i(F^\bot(r+g_z)\breve
k_0(k),\breve g)}(\det F(g_z))\bar\zeta_{\lambda,k,r+g_z,s}(h'),
$$
thus
$$
\int_Gdh'\bar\zeta_{k,\lambda,r,s}(g^{-1}h')f(h')=e^{-i(F^\bot(r+g_z)\breve
k_0(k),\breve g)}(\det F(g_z))\tilde f(k,\lambda,r+g_z,s).
$$
Therefore
$$
f(g)=\int_{\mathfrak{K}}d\nu(k)\int_\mathbb{R}dz\sum_{s=\pm1}\int_{Sp(\mathbf{\Delta}_h)}d\lambda
(\det F(z))P_{\lambda,k,s}(-z)
$$
$$
\times e^{-i(F^\bot(z+g_z)\breve k_0(k),\breve g)}(\det
F(g_z))\tilde f(k,\lambda,z+g_z,s)
$$
by substitution $r=z+g_z$
$$
=\int_{\mathfrak{K}}d\nu(k)\int_\mathbb{R}dr\sum_{s=\pm1}\int_{Sp(\mathbf{\Delta}_h)}d\lambda
(\det F(r))\tilde f(k,\lambda,r,s)
$$
$$
\times e^{-i(F^\bot(r)\breve k_0(k),\breve
g)}P_{\lambda,k,s}(g_z-r)
$$
\begin{eqnarray}
=\int_{\mathfrak{K}}d\nu(k)\int_\mathbb{R}dr\sum_{s=\pm1}\int_{Sp(\mathbf{\Delta}_h)}d\lambda
(\det F(r))\tilde
f(k,\lambda,r,s)\zeta_{k,\lambda,r,s}(g),\label{Fourier_Inv}
\end{eqnarray}
which is our final inversion formula.

It remains to note that by denoting $\alpha=(k,\lambda,r,s)$ we
have satisfied all conditions for the eigenfunction expansion
$\bar\zeta_\alpha(f)$ to give a conventional Fourier transform in
sense of \cite{ZhAPub12012}.

\section{Automorphism groups of Bianchi I-VII groups \label{Section_Aut}}

In this section we consider the automorphism groups $\Aut(G)$ of
Bianchi I-VII groups. After performing the calculations we
discovered that these automorphisms have been obtained earlier in
\cite{Harvey1979}. However we give here also the dual actions of
these automorphisms on $\hat G$ which is new. This may become
important when analyzing the transformation in the Fourier space
induced by automorphisms. We start by noting that Bianchi I-VII
groups are matrix groups, and their matrix realization can be
given by
$$
\begin{pmatrix}
x \\ y \\ z
\end{pmatrix}
\mapsto G(x,y,z)=
\begin{pmatrix}
F(z) &

\begin{matrix}
x \\ y
\end{matrix}\\

\begin{matrix}
0 & 0
\end{matrix} & 1
\end{pmatrix}.
$$
It can be easily seen that in this realization the group
multiplication indeed corresponds to the matrix multiplication.
The respective Lie algebra realization will be
$$
\begin{pmatrix}
x \\ y \\ z
\end{pmatrix}
\mapsto\mathfrak{g}(x,y,z)=
\begin{pmatrix}
zM &

\begin{matrix}
x \\ y
\end{matrix}\\

\begin{matrix}
0 & 0
\end{matrix} & 0
\end{pmatrix},
$$
which again can be checked to intertwine the matrix commutation
with the Lie bracket. Moreover, we could have obtained immediately
the exponential map by setting
$\exp(x,y,z)=\exp(\mathfrak{g}(x,y,z))$ instead of referring to
the Zassenhaus formula, but the latter is a more Lie theoretical
approach. Now that all Bianchi groups are connected and simply
connected by Theorem 1 of III.6.1 in \cite{Bourbaki1998} it
follows that $\Aut(G)=\Aut(\mathfrak{g})$ in the sense of a
topological group isomorphism (see also \cite{Hochschild1965}). An
algebra homomorphism of matrix algebras is necessarily linear in
the matrix elements. It follows that any
$\check\alpha\in\Aut(\mathfrak{g})$ depends linearly on $x$, $y$,
$z$, and is therefore given by an affine transformation in
$\mathbb{R}^3$, which is actually a linear transformation because
it preserves 0. Therefore we first determine $\Aut(\mathfrak{g})$.
Let the linear map $\check\alpha:\mathbb{R}^3\to\mathbb{R}^3$ be
given by
$$
\begin{pmatrix}
x \\ y \\ z
\end{pmatrix}
=
\begin{pmatrix}
\check\alpha_{2\times 2} & \check\alpha_{\bullet3} \\
\check\alpha_{3\bullet} & \check\alpha_{33}
\end{pmatrix}
\begin{pmatrix}
q \\ r \\ s
\end{pmatrix},
$$
where we use notation similar to (\ref{MatrixNot}). Then
$\check\alpha\in\Aut(\mathfrak{g})$ if and only if
$\check\alpha[\vec x,\vec y]=[\check\alpha\vec x,\check\alpha\vec
y]$, where $[,]$ is the Lie bracket. Expanding this condition we
get the system of requirements
\begin{eqnarray}
\check\alpha_{2\times 2}M-\check\alpha_{33}M\check\alpha_{2\times
2}+M\check\alpha_{\bullet3}\check\alpha_{3\bullet}=0,\label{alphaMcomm}
\end{eqnarray}
$$
\check\alpha_{2\times 2}M\sigma\check\alpha_{3\bullet}^\top=0,
$$
$$
\check\alpha_{3\bullet}M=0,
$$
where $\sigma$ is the unit antisymmetric matrix. The patterns of
admissible matrices $\check\alpha$ satisfying this system have to
be computed for each group independently. For Bianchi I we have
$M=0$ and all three conditions are satisfied trivially. For
Bianchi IV-VII the matrix $M$ is invertible hence the third
requirement means $\check\alpha_{3\bullet}=0$, so that the second
becomes trivial, and the first reduces to $\check\alpha_{2\times
2}M-\check\alpha_{33}M\check\alpha_{2\times 2}=0$. The cases of
groups Bianchi II and III are a bit more involved, but the
calculations are straightforward. We present the results in the
Table \ref{AutomTable}. Note that whenever
$\check\alpha_{3\bullet}=0$ the invertibility of $\check\alpha$
requires $\check\alpha_{33}\neq0$. As it can be seen from the
table some algebras allow for reflective automorphisms and their
automorphism groups consist of two components (this is what the
union symbol $\bigcup$ in Table \ref{AutomTable} refers to).
Matrices of these pattern forms exhaust the groups
$\Aut(\mathfrak{g})$. One can compare this pattern of
automorphisms to those available in the literature, for instance,
of the Heisenberg algebra in \cite{Folland1989}.

\begin{table}\label{AutomTable}
\begin{tabular}{|c|c|c|c|c|c|}\hline
I & II & III & IV & V & VI, $q\neq1$\\\hline

$\begin{pmatrix} a & b & c\\
d & e & f\\
g & h & j
\end{pmatrix}$ &

$\begin{pmatrix} a & 0 & c\\
d & a\cdot j-c\cdot g & f\\
g & 0 & j
\end{pmatrix}$ &

$\begin{pmatrix} a & 0 & c\\
0 & e & f\\
0 & 0 & 1
\end{pmatrix}$ &

$\begin{pmatrix} a & 0 & c\\
d & a & f\\
0 & 0 & 1
\end{pmatrix}$ &

$\begin{pmatrix} a & b & c\\
d & e & f\\
0 & 0 & 1
\end{pmatrix}$ &

$\begin{pmatrix} a & 0 & c\\
0 & e & f\\
0 & 0 & 1
\end{pmatrix}$\\\hline

\end{tabular}

\begin{tabular}{|c|c|c|}\hline
VI, $q=1$ & VII, $p\neq0$ & VII, $p=0$\\\hline

$\begin{pmatrix} a & 0 & c\\
0 & e & f\\
0 & 0 & 1
\end{pmatrix}
\bigcup
\begin{pmatrix} 0 & b & c\\
d & 0 & f\\
0 & 0 & -1
\end{pmatrix}$ &

$\begin{pmatrix} a & b & c\\
-b & a & f\\
0 & 0 & 1
\end{pmatrix}$ &

$\begin{pmatrix} a & b & c\\
-b & a & f\\
0 & 0 & 1
\end{pmatrix}
\bigcup
\begin{pmatrix} a & b & c\\
b & -a & f\\
0 & 0 & -1
\end{pmatrix}$\\\hline

\end{tabular}
\caption{Patterns of permissible matrices $\check\alpha$ for
Bianchi I-VII algebras}
\end{table}

Now the corresponding group homomorphisms $\check A\in\Aut(G)$ can
be found by composing $\check\alpha\in\Aut(\mathfrak{g})$ with the
exponential map, $\check
A\exp((x,y,z))=\exp(\check\alpha(x,y,z))$. Recall that the
exponential map is given by
$$
\exp((x,y,z))=([1+F(z)D(z)](x,y),z),
$$
(where $D(z)$ is defined in (\ref{D_ZDef}))and because this map is
bijective we know that the matrix $[1+F(z)D(z)]$ is invertible for
all $z$. The logarithmic map can be written as
$$
\log((x,y,z))=([1+F(z)D(z)]^{-1}(x,y),z),
$$
and the action of the group homomorphism $\check A$ related to the
algebra homomorphism $\check\alpha$ becomes
$$
\check A
\begin{pmatrix}\breve x\\z\end{pmatrix}
=
\begin{pmatrix}[1+F(z')D(z')]\left(\check\alpha_{2\times 2}[1+F(z)D(z)]^{-1}\breve
x+\check\alpha_{\bullet3}z\right)\\z'\end{pmatrix}
$$
$$
z'=\check\alpha_{3\bullet}[1+F(z)D(z)]^{-1}\breve
x+\check\alpha_{33}z,
$$
for Bianchi II-VII groups and
$$
\check A
\begin{pmatrix}\breve x\\z\end{pmatrix}
= \check\alpha
\begin{pmatrix}\breve x\\z\end{pmatrix}
$$
for the Bianchi I group. From $\check\alpha_{3\bullet}M=0$ it
follows that
$\check\alpha_{3\bullet}[1+F(z)D(z)]^{-1}=\check\alpha_{3\bullet}$.
Thus the formula for Bianchi II-VII simplifies to
$$
\check A
\begin{pmatrix}\breve x\\z\end{pmatrix}
=
\begin{pmatrix}[1+F(\check\alpha_{3\bullet}\breve
x+\check\alpha_{33}z)D(\check\alpha_{3\bullet}\breve
x+\check\alpha_{33}z)]\left(\check\alpha_{2\times
2}[1+F(z)D(z)]^{-1}\breve
x+\check\alpha_{\bullet3}z\right)\\\check\alpha_{3\bullet}\breve
x+\check\alpha_{33}z\end{pmatrix}.
$$
One more step can be done in this generality. From formula
(\ref{alphaMcomm}) and $\check\alpha_{3\bullet}M=0$ it follows
that
$$
\check\alpha_{2\times
2}M^m=(\check\alpha_{33}M)^m\check\alpha_{2\times 2}
$$
for $m\ge2$ and therefore for any sequence of complex numbers
$\{f_m\}_{m=0}^\infty$
$$
\check\alpha_{2\times 2}\sum_{m=0}^\infty f_mM^m=\sum_{m=0}^\infty
f_m(\check\alpha_{33}M)^m\check\alpha_{2\times
2}+f_1M\check\alpha_{\bullet3}\check\alpha_{3\bullet}
$$
whenever the left hand side exists. This can be used to establish
that
$$
[1+F(\check\alpha_{3\bullet}\breve
x+\check\alpha_{33}z)D(\check\alpha_{3\bullet}\breve
x+\check\alpha_{33}z)]\check\alpha_{2\times
2}=\check\alpha_{2\times
2}[1+F(\frac{\check\alpha_{3\bullet}\breve
x}{\check\alpha_{33}}+z)D(\frac{\check\alpha_{3\bullet}\breve
x}{\check\alpha_{33}}+z)].
$$
This far on the explicit form of the group automorphisms.

Now let us look at the dual spaces $\hat G$. If $\check
A\in\Aut(G)$ and $\pi\in\hat G$ then $\pi\circ\check A=\pi'$ for
some $\pi'\in\hat G$. Thus $\check A$ induces a pullback map
$\check A^*:\hat G\to\hat G$. Because $\dim\pi=\dim\pi'$ it
follows that $\check A^*$ maps generic representations into
generic representations and singletons into singletons. The
representations $\pi\in\hat G$ are in a bijective correspondence
with the derived representations $d\pi$ which are irreducible
representations of the Lie algebra $\mathfrak{g}$. In a similar
fashion, any $\check\alpha\in\Aut(\mathfrak{g})$ induces a
pullback map $\alpha^*:d\hat G\to d\hat G$ between derived
representations. This pullback map is easier to study than that
for the group representations. Consider first the Bianchi I group.
The irreducibles are given by
$$
T_{\vec k}(\vec g)=e^{i\{\vec k,\vec g\}},
$$
and the derived representations are
$$
dT_{\vec k}(\vec x)=i\{\vec k,\vec x\}.
$$
An automorphism $\vec x=\check\alpha\vec q$ induces the pullback
map $\check\alpha^*(\vec k)=\check\alpha^\top\vec k$. Consider now
the singletons of a Bianchi II-VII group. They are given for $\vec
k\in V^0\oplus\mathbb{R}$ by
$$
T_{\vec k}(\vec g)=e^{i\{\vec k,\vec g\}}=e^{i\{\breve k,\breve
g\}}e^{ik_3g_z},
$$
and the derived singletons are
$$
dT_{\vec k}(\vec x)=i\{\vec k,\vec x\},
$$
and again, an automorphism $\vec x=\check\alpha\vec q$ induces the
pullback map $\check\alpha^*(\vec k)=\check\alpha^\top\vec k$.
This in particular means that $\breve k'=\check\alpha_{2\times
2}^\top\breve k+k_3\check\alpha_{3\bullet}^\top$, and if $\breve
k\in V^0$ then
$$
M^\top\breve k'=M^\top\check\alpha_{2\times 2}^\top\breve k+k_3
M^\top\check\alpha_{3\bullet}^\top=0,
$$
where (\ref{alphaMcomm}) and $\check\alpha_{3\bullet}M=0$ were
used. We explicitly observe that the automorphisms map singletons
into singletons, as expected. Finally we turn to the generic
representations. Let $T_{\breve k}$ be a generic representation of
$G$. Then it acts on $L^2(\mathcal{R})$ by
$$
T_{\breve k}(\vec g)f[w]=e^{i\{\breve k,F(-w)\breve
g\}}f[w-g_z]\mbox{, }\vec g=(\breve g,g_z)\in G.
$$
Its derived representation will be
$$
dT_{\breve k}(\vec x)f[w]=i\{\breve k,F(-w)\breve
x\}f[w]-z\partial_wf[w].
$$
Under the automorphism $\vec x=\check\alpha\vec q$ it will turn
into
$$
dT_{\breve k}(\vec q)f[w]=i\{\breve k,F(-w)[\check\alpha_{2\times
2}\breve
q+\check\alpha_{\bullet3}s]\}f[w]-[\check\alpha_{3\bullet}\breve
q+\check\alpha_{33}s]\partial_wf[w].
$$
For simplicity we will consider only the automorphisms with
$\check\alpha_{3\bullet}=0$. Thus we omit only some automorphisms
of the Heisenberg group, but this group is a central subject in
harmonic analysis, and the missing results can be found in the
literature. Define the isometric isomorphism
$\mathfrak{T}:L^2(\mathbb{R})\to L^2(\mathbb{R})$ by
$$
\mathfrak{T}(f)[w]=\frac{1}{\sqrt{\check\alpha_{33}}}e^{i\{\breve
k,\int_0^{-w}F(\check\alpha_{33}\xi)d\xi\check\alpha_{\bullet3}\}}f(\check\alpha_{33}w).
$$
Consider the representation $dT_{\breve k'}$ with $\breve
k'=\check\alpha_{2\times 2}^\top\breve k$. Note that because
$\check\alpha_{3\bullet}=0$ we have that $\check\alpha_{2\times
2}^\top$ is invertible, and from (\ref{alphaMcomm}) we assure that
it maps $\breve k\notin V^0$ to $\breve k'\notin V^0$. Thus
$dT_{\breve k'}$ is generic. Its action on the image
$\mathfrak{T}(f)[w]$ is given by
$$
dT_{\breve k'}(\vec q)\mathfrak{T}(f)[w]=i\{\breve k',F(-w)\breve
q\}\mathfrak{T}(f)[w]-s\partial_w\mathfrak{T}(f)[w]
$$
$$
=\mathfrak{T}(i\{\breve
k',F(-\frac{\bullet}{\check\alpha_{33}})\breve
q\}f)[w]+\mathfrak{T}(is\{\breve
k,F(-\bullet)\check\alpha_{\bullet3}\}f)[w]-\mathfrak{T}(s\check\alpha_{33}\partial
f).
$$
Recall that we have seen that from (\ref{alphaMcomm}) and
$\check\alpha_{3\bullet}=0$ it follows $\check\alpha_{2\times
2}F(z)=F(\check\alpha_{33}z)\check\alpha_{2\times 2}$, hence
$$
i\{\breve k',F(-\frac{\bullet}{\check\alpha_{33}})\breve
q\}=i\{\check\alpha_{2\times 2}^\top\breve
k,F(-\frac{\bullet}{\check\alpha_{33}})\breve q\}=i\{\breve
k,F(-\bullet)\check\alpha_{2\times 2}\breve q\}.
$$
We finally see that
$$
dT_{\breve k'}(\vec
q)\mathfrak{T}(f)[w]=\mathfrak{T}\left([i\{\breve
k,F(-\bullet)\check\alpha_{2\times 2}\breve q\}+is\{\breve
k,F(-\bullet)\check\alpha_{\bullet3}\}]f-s\check\alpha_{33}\partial
f\right)=\mathfrak{T}\left(dT_{\breve k}(\vec q)f\right),
$$
which means that $\mathfrak{T}$ intertwines the irreducible
representations $dT_{\breve k}\circ\check\alpha$ and
$dT_{\check\alpha_{2\times 2}^\top\breve k}$. Thus these two
representations are unitarily equivalent, $\check\alpha^*(\breve
k)=\check\alpha_{2\times 2}^\top\breve k$. If the cross sections
are chosen explicitly (for instance, as we did) then it is a
straightforward calculation to find the action of $\check\alpha^*$
on $\tilde K$ and $\mathfrak{K}$. We omit these calculations here
because, first, they depend on the preferred choice of the cross
sections, and second, they involve transcendental functions (e.g.,
the solution of the equation $e^y+ay=x$) and are not transparent
visually, and do not provide a better insight into the matter.

\section{Separation of time variable in homogeneous universes \label{Section_SepVarHomUni}}

We want to see to which extent the technique of mode decomposition
developed in \cite{ZhAPub12012} is applicable to hyperbolic fields
on Bianchi type and FRW cosmological models. For this aim we have
to check whether the conditions of {\bf Proposition 2.3} are
satisfied. Recall that the metric $g$ of a homogeneous spacetime
$M=\mathcal{I}\times\Sigma$, where $\mathcal{I}$ is an open
interval and $\Sigma$ is a Bianchi type homogeneous space, is
given by
$$
ds^2_g=dt^2-\sum_{\alpha,\beta=1}^3\check
h_{\alpha\beta}(t)d\omega^\alpha(\vec x)d\omega^\beta(\vec x),
$$
where $\check h_{ij}(t)$ ($t\in\mathcal{I}$) is a smooth positive
definite symmetric matrix function, and $d\omega^i$ are the left
invariant 1-forms on $\Sigma$. Condition (i) of {\bf Proposition
2.3} is automatically satisfied because $g_{00}=1$. For condition
(ii) note that
$$
\sum_{i,j=1}^3g^{ij}(x)\frac{\partial g_{ij}}{\partial
t}(x)=\sum_{i,j=1}^3\sum_{\alpha,\beta=1}^3\sum_{\gamma,\delta=1}^3\check
h^{\alpha\beta}(t)\dot{\check{h}}_{\gamma\delta}(t)X_\alpha^i(\vec
x)X_\beta^j(\vec x)(d\omega^\gamma)_i(\vec
x)(d\omega^\delta)_j(\vec x)
$$
$$
=\sum_{\alpha,\beta=1}^3\sum_{\gamma,\delta=1}^3\check
h^{\alpha\beta}(t)\dot{\check{h}}_{\gamma\delta}(t)\langle
X_\alpha,d\omega^\gamma\rangle_h\langle
X_\beta,d\omega^\delta\rangle_h
$$
$$
=\sum_{\alpha,\beta=1}^3\sum_{\gamma,\delta=1}^3\check
h^{\alpha\beta}(t)\dot{\check{h}}_{\gamma\delta}(t)\delta^\gamma_\alpha\delta^\delta_\beta=Tr[\check
h^{-1}(t)\dot{\check{h}}(t)].
$$
This shows that the condition (ii) is also satisfied. We see that
the homogeneous spacetimes are an ideal playground for mode
decomposition. Note that FRW spacetimes correspond to the choice
$\check h_{\alpha\beta}(t)=a^2(t)\delta_{\alpha\beta}$.

If conditions (iii) and (iv) are also satisfied depends on the
chosen connection $\nabla$. For the scalar field (iii) is
automatically satisfied with $\Gamma=0$. In \cite{ZhAPub12012} we
defined the field operator $D=\Box+m^2(x)$, and the instantaneous
field operator $D_{\Sigma_t}=-\Delta+m^2(x)$. The eigenfunction
equation of the latter operator was written as
$D_{\Sigma_t}\zeta_\alpha=\lambda_t(\alpha)\zeta_\alpha$.
Condition (iv) of {\bf Proposition 2.3} in \cite{ZhAPub12012} can
be obviously satisfied if $\check h_{\alpha\beta}(t)=a^2(t)\check
h^0_{\alpha\beta}$ as in this case the time evolution amounts only
to a rescaling of $\lambda(\alpha)$ in
$D_{\Sigma_t}\zeta_\alpha=\lambda_t(\alpha)\zeta_\alpha$. Note
that because $D_{\Sigma_t}$ is $G$-invariant, the term $m^2$ is a
function of $t$ only. This is the situation where the dynamics of
the universe consists of merely an isotropic rescaling. Thus, for
instance, in case of FRW spacetimes condition (iv) is satisfied
automatically.

But the condition (iv) can be also satisfied non-trivially with an
anistropic rescaling and even some shears and rotations. This is
clearly possible for Bianchi I group, because the eigenfunctions
do not depend on the matrix $\check h$. For Bianchi II-VII groups
one has to look at the equation (\ref{EqPr}) to see to which
extent the solution $P(z)$ depends on the matrix $\check h$.
Suppose $\check h$ and $\check j$ are two matrices for which there
exist two common linearly independent solutions $P(z)$ and $Q(z)$.
Because we have already seen that an isotropic rescaling is always
possible, without loss of generality we assume $\check
h^{33}=\check j^{33}$ (we again use the notations
\ref{MatrixNot}). Fix $\breve k\in\mathbb{R}^2\setminus V^0$ and
$0<\lambda\in\mathbb{R}$. Now the condition that the two equations
have the same solution spaces can be cast into the following pair
of equations,
$$
\check h^{3\bullet}F^\top(z)\breve k=\check
j^{3\bullet}F^\top(z)\breve q,
$$
$$
\lambda+\breve k^\top F(z)\check h^{2\times2}F^\top(z)\breve
k=\lambda'+\breve q^\top F(z)\check j^{2\times2}F^\top(z)\breve q
$$
for some $\breve q\in\mathbb{R}^2\setminus V^0$,
$0<\lambda\in\mathbb{R}$ and for all $z\in\mathbb{R}$. That
non-trivial possibilities exist is clear visually, but we will not
go into details here. Once these conditions are satisfied at all
$t\in\mathcal{I}$ for the 1-parameter family of matrices $\check
h(t)$ describing the evolution of the spatial metric then
condition (iv) is satisfied, and we have an explicit formula for
the time dependent eigenvalue $\lambda_t(\alpha)$.

As electromagnetism is of primary importance for us, let us
finally show that the assumption $\check
h_{\alpha\beta}(t)=a^2(t)\check h^0_{\alpha\beta}$ is sufficient
to satisfy condition (iii) for the 1-form field. Indeed, the
1-form field is given by the Levi-Civita connection, for which the
connection forms are $(\Gamma_i)^a_b=-{\bf\Gamma}^a_{ib}$. Let us
compute the symbol ${\bf\Gamma}^a_{0b}$. It is easy to see that
${\bf\Gamma}^0_{0b}={\bf\Gamma}^a_{00}=0$. For $a,b>0$ we have
$$
{\bf\Gamma}^a_{0b}=\frac{1}{2}\sum_{m=1}^3g^{am}\frac{\partial
g_{mb}}{\partial t}=\sum_{\alpha,\beta,\gamma=1}^3\check
h^{\alpha\beta}(t)\dot{\check{h}}_{\beta\gamma}(t)X^a_\alpha(\vec
x)d\omega^\gamma_b(\vec x).
$$
If $\check h_{\alpha\beta}(t)=a^2(t)\check h^0_{\alpha\beta}$ then
$\partial_t\check{h}_{\alpha\beta}(t)=2H(t)\check
h_{\alpha\beta}(t)$ with $H(t)=\frac{\dot a(t)}{a(t)}$ being the
Hubble constant, and we get
$$
{\bf\Gamma}^a_{0b}=2H(t)\sum_{\alpha,\beta,\gamma=1}^3\check
h^{\alpha\beta}(t)\check h_{\beta\gamma}(t)X^a_\alpha(\vec
x)d\omega^\gamma_b(\vec x)=2H(t)\delta^a_b.
$$
Thus $\Gamma_0=-2H(t)0\oplus {\bf 1}_3$ is not only a function of
$t$, but also commutes with any matrix, hence (iii) is trivially
satisfied.

\section{The mode decomposition of the Klein-Gordon field \label{Section_KG}}

The investigation of classical and quantum fields in anisotropic
cosmological models has been carried out by different authors
since decades (see, e.g.,
\cite{Lukash1976},\cite{ZeldovichStarobinsky71},\cite{Fulling1974a},\cite{Fulling1974}).
However these works mainly concentrate on Bianchi I models where
the harmonic analysis and mode decomposition are obvious. With the
background developed above we can extend this to all Bianch I-VII
models.

Mode decomposition is the rigorous generalization of the well
known heuristic idea of decomposition of a linear field into
harmonic oscillators. In \cite{ZhAPub12012} we have described how
the mode decomposition can be performed explicitly for an
arbitrary vector valued field given the explicit spectral theory
of the model spatial sections $\Sigma$. As we have already
explicitly constructed the spectral theory of the line bundle over
Bianchi I-VII spacetimes, we can apply the mode decomposition to
the Klein-Gordon field and see what can be gained by this
technique.

Let $M$ be Bianchi I-VII type spacetime, i.e., a 4-dimensional
smooth globally hyperbolic Lorentzian manifold with a smooth
global time function chosen \cite{Bernal_Sanchez_2005}, and with
the isometry group $G$ which is one of the groups Bianchi I-VII,
so that $G$ acts simply transitively on the equal time
hypersurfaces $\Sigma_t$ (for all missing details see
\cite{ZhAPub12012}). Or to put it in simpler words, let
topologically $M=\mathbb{R}\times G$ where $G$ is a Bianchi I-VII
group, and let the metric be given by $ds^2=dt^2-h_{ij}(t,\vec
x)dx^idx^j$, $\vec x=(x^1,x^2,x^3)\in\Sigma_t$, so that for any
$t\in\mathbb{R}$ the Riemannian metric $h_{ij}(t,.)$ is left
invariant under the action of the underlying Lie group $G$. In
\cite{ZhAPub12012} we cosnidered the field operator $D=\Box+m^2$,
where $\Box$ is the d'Alembert operator related to the Levi-Civita
connection, and $m^2\in\mathbb{R}_+$ is a positive constant. The
Klein-Gordon field is described by the equation
$$
D\phi=(\Box+m^2)\phi=0.
$$
As we have seen in the previous section for each Bianchi type
there are certain restrictions on the dynamics of the spatial
metric $h(t)$ for the mode decomposition to be applicable. Recall
that $h(t)$ is described by the positive definite matrix $\check
h(t)$. For simplicity let us consider only an isotropic rescaling,
$$
\check h(t)=a^2(t)\check h(0).
$$
We further wrote in \cite{ZhAPub12012} $D=D_t+D_{\Sigma_t}$, where
$D_t$ is a differential operator in variable $t$, and
$D_{\Sigma_t}$ is the instantaneous field operator
$D_{\Sigma_t}=-\Delta_t+m^2$. Its eigenfunctions satisfying
$D_{\Sigma_t}\zeta_\alpha=\lambda_\alpha(t)\zeta_\alpha$ will be
the eigenfunctions of the Laplace operator,
$-\Delta_t\zeta_\alpha=(\lambda_\alpha(t)-m^2)\zeta_\alpha$.
Because the time dependent Fourier transform is normalized at time
$t=0$, we have $\lambda_\alpha(0)=-\lambda+m^2$ where
$\alpha=(k,\lambda,r,s)$. Using $\check h^{ij}(t)=a^{-2}(t)\check
h^{ij}(0)$ it follows that $\Delta_t=a^{-2}(t)\Delta_0$ and hence
\begin{eqnarray}
\lambda_\alpha(t)=\frac{-\lambda}{a^2(t)}+m^2.\label{lambda_alpha_t}
\end{eqnarray}
As we have demanded that the spacetime $M=\mathcal{I}\times\Sigma$
is globally hyperbolic, the Klein-Gordon field operator
$D=\Box+m^2$ has unique advanced(+) and retarded(-) fundamental
solutions $E^\pm:C_0^\infty(M)\mapsto C_0^\infty(M)$ with the
properties
\begin{enumerate}
\item $(\Box+m^2)E^\pm f=f=E^\pm(\Box+m^2) f$

\item $\supp\{E^\pm f\}\subset J^\pm(\supp f)$
\end{enumerate}
for all $f\in C_0^\infty(M)$. Here, $J^\pm(N)$ denotes the causal
future(+) and past(-) of a subset $N\subset M$. We refer to
\cite{BarGinouxPfaffle200703} for full discussion and further
references. Then $E=E^+-E^-$ is called the causal propagator of
the Klein-Gordon operator $\Box+m^2$ on $(M,g)$. Any $\phi=Ef$,
$f\in C_0^\infty(M)$ is a solution of the homogeneous Klein-Gordon
equation $(\Box+m^2)\phi=0$, and the restriction of $\phi$ to any
Cauchy surface is compactly supported. We define
$Sol_0(M)=EC_0^\infty(M)$. We also write $E(f,h)=\langle
f,Eh\rangle_{L^2(M)}$ where
$$
\langle f,h\rangle_{L^2(M)}=\int_Mf(x)h(x)dvol_g(x);
$$
$dvol_g$ is the volume form on $M$ induced by the metric $g$.
Moreover, we set $\mathcal{K}=C_0^\infty(M)/\ker E$. Then (the
real part of) $\mathcal{K}$ becomes a symplectic space with
symplectic form
$$
\sigma([f],[h])=E(f,h),\qquad [f]=f+\ker E,\qquad [h]=h+\ker E.
$$
The map $\mathcal{K}\mapsto Sol_0(M)$, $[f]\mapsto Ef$ is a
symplectomorphism upon endowing $Sol_0(M)$ with the symplectic
form
$$
\sigma(\phi,\psi)=\int_\mathcal{C}\left(\phi n^a\nabla_a\psi-\psi
n^a\nabla_a\phi\right)d\eta_\mathcal{C}
$$
for any Cauchy surface $\mathcal{C}$ in $M$ having future-pointing
unit normal field $n^a$ and metric-induced hypersurface measure
$d\eta_\mathcal{C}$. Again, we refer to
\cite{BarGinouxPfaffle200703} for a complete discussion and full
proofs.

Now by {\bf Proposition 2.3} in \cite{ZhAPub12012} any $\phi\in
Sol_0(M)$ can be written as
\begin{eqnarray}
\phi(t,\vec
x)=\int_{\tilde\Sigma}d\mu(\alpha)\left[a^\phi(\alpha)T_\alpha(t)\zeta_\alpha(\vec
x)+b^\phi(\alpha)\bar T_\alpha(t)\zeta_\alpha(\vec
x)\right],\label{ModeDecompKG}
\end{eqnarray}
where $\alpha=(k,\lambda,r,s)$ and $d\mu(\alpha)=d\nu(k)d\lambda
F(r)dr$. The modes $T_\alpha$ are to this point arbitrary
$\mu$-measurable solutions of the mode equation
\begin{eqnarray}
\ddot T_\alpha(t)+F(t)\dot
T_\alpha(t)+G_\alpha(t)T_\alpha(t)=0\label{ModeEqKG}
\end{eqnarray}
such that $T_\alpha$ and $\bar T_\alpha$ are linearly independent
solutions. As found in \cite{ZhAPub12012} for the scalar field on
an isotropically expanding universe,
\begin{eqnarray}
F(t)=P(t)=3H(t)=3\frac{\dot a(t)}{a(t)}\mbox{,
}G_\alpha(t)=\lambda_\alpha(t)\mbox{,
}I(t)=a^3(t).\label{ModeEqCoeff}
\end{eqnarray}
The propagator $E[f]$ is obviously a weak solution of the
Klein-Gordon equation. If we want to mode decompose it we have to
satisfy (2.19) of \cite{ZhAPub12012}. By {\bf Proposition 2.6} of
\cite{ZhAPub12012} we can do it if $M$ is an analytic manifold.
But the spatial metric $h$ is analytic in the spatial variable $x$
because it is expressed in left invariant fields of the Lie group
$G$. Thus we only need to choose $a(t)$ a real analytic function.
The spectrum of $D_{\Sigma_t}$ is strictly uniform with the
prescription $\omega(\alpha)=-\lambda$. Now if we restrict the
initial data $T_\alpha(0)=p(\lambda)$ and $\dot
T_\alpha(0)=q(\lambda)$ where $p,q\in\mathcal{A}(\mathbb{H}_0)$
and $p\bar q-\bar pq=i$ then {\bf Proposition 2.9} is applicable.
Suppose this is done, now from the {\bf Section 2.6} of
\cite{ZhAPub12012} we find that (for the line bundle obviously
$s(\alpha)=1$)
\begin{eqnarray}
E[f](x)=i\int_{\tilde\Sigma}d\mu(\alpha)\left[\langle\bar
T_\alpha\bar\zeta_\alpha,f\rangle_MT_\alpha(t)\zeta_\alpha(\vec
x)-\langle T_\alpha\bar\zeta_\alpha,f\rangle_M\bar
T_\alpha(t)\zeta_\alpha(\vec x)\right].\label{PropModeDecomp}
\end{eqnarray}

Now we proceed to the quantization. The mode decomposition of
arbitrary CCR quantum fields is discussed in \cite{ZhAThesis}
which provides a generalization of the works by
\cite{Lueders_Roberts_1990}, \cite{Fulling1989},
\cite{Parker1969},\cite{Birrell1984}. Here we summarize some
results applied to the quantized Klein-Gordon field. The latter is
given by the field algebra $\mathcal{A}$ generated by the unit
$\id$ and the elements $\phi(f)$ satisfying
\begin{enumerate}
\item $\phi(af+h)=a\phi(f)+\phi(h)$,

\item $\phi(\bar f)=\phi^*$,

\item $[\phi(f),\phi(h)]=-i\cdot E(f,h)\id$,

\item $\phi((\Box+m^2)f)=0$, \qquad$\forall
f,h\in\mathcal{D}(M)\mbox{, }a\in\mathbb{C}$.
\end{enumerate}
A state $\omega$ of the field is a linear functional
$\omega\in\mathcal{A}'$ such that $\omega(\id)=1$ and
$\omega(A^*A)\ge0$ for all $A\in\mathcal{A}$. The 2-point function
$\omega_2$ of a state $\omega$ is the bilinear form
$\omega_2(f,h)=\omega(\phi(f)\phi(h))$. It follows that
$\omega_2((\Box+m^2)f,h)=\omega_2(f,(\Box+m^2)h)=0$,
$\omega_2(\bar f,f)\ge0$ and
$\omega_2(f,h)-\omega_2(h,f)=-iE(f,h)$. Moreover,
$$
\overline{\omega_2(\bar f,\bar
h)}=\overline{\omega(\phi(f)^*\phi(h)^*)}=\overline{\omega([\phi(h)\phi(f)]^*)}=\omega(\phi(h)\phi(f))=\omega_2(h,f),
$$
$\omega_2$ is hermitian. A quasifree state $\omega$ is a state
which is completely determined by its 2-point function $\omega_2$
(for precise definitions see, e.g.,
\cite{Kay1991},\cite{Araki_Yamagami_1982}). Being a weak
bi-solution of the field equation $\omega_2$ can be mode
decomposed and that is done by {\bf Proposition 4.1} of
\cite{ZhAThesis}. (For earlier results using mode decomposition of
2-point functions see
\cite{Lueders_Roberts_1990},\cite{Olbermann2007},\cite{Degner_Verch_2010}.)
If we denote for convenience
$$
\tilde f^u(\alpha)=\langle T_\alpha\zeta_\alpha,f\rangle_M\mbox{,
}\tilde f^v(\alpha)=\langle\bar T_\alpha\zeta_\alpha,f\rangle_M,
$$
where
$$
\langle f,h\rangle_M=\int_Mdx f(x)h(x),
$$
then a 2-point function can be written as
\begin{eqnarray}
\omega_2(f,h)=a^\omega(\tilde f^u,\tilde
h^u)+\overline{a^\omega(\tilde{\bar{f}}^u,\tilde{\bar{h}}^u)}+b^\omega(\tilde
f^u,\tilde
h^v)+\overline{b^\omega(\tilde{\bar{f}}^u,\tilde{\bar{h}}^v)}+\delta(\tilde
f^u,\tilde h^v),\label{2PointFuncModeDecomp}
\end{eqnarray}
where $a^\omega$ and $b^\omega$ are bi-distributions satisfying
certain symmetry and positivity conditions, and $\delta$ is the
bi-distribution given by the integral kernel of the usual delta
function. But it differs from the delta function because we
identify kernels with bi-distribution using the pairing
$$
a(\tilde f,\tilde
h)=\int_{\tilde\Sigma}\int_{\tilde\Sigma}d\mu(\alpha)d\mu(\beta)a(\alpha,\beta)\tilde
f(\alpha)\tilde h(-\beta).
$$
This strange convention is chosen only for calculational purposes
and can be transformed to the usual form if needed. In particular,
the $\delta$ term in the formula for the 2-point function is
$$
\delta(\tilde f^u,\tilde
h^v)=\int_{\tilde\Sigma}d\mu(\alpha)\tilde f(\alpha)\tilde
h(-\beta).
$$
This term by itself represents a quasifree pure state, and the
remaining part of the generic 2-point function is symmetric. This
state depends on the choice of the modes $T_\alpha$. Choosing
different modes $S_\alpha$ we will find a rich supply of such pure
states. Then we can transform these states back to our original
$T_\alpha$ as follows. Let the old and new modes be related by
$S_\alpha=\mu_\alpha T_\alpha+\nu_\alpha\bar T_\alpha$ with
$|\mu_\alpha|^2-|\nu_\alpha|^2=1$. The pure state given by the
$\delta$ term in modes $S_\alpha$ is determined by the choice
$a^\omega=b^\omega=0$. As described in \cite{ZhAThesis} in the
original modes these components will become
$a^\omega(\alpha,\beta)=\delta(\alpha,\beta)\mu_\alpha\bar\nu_\alpha$
and $b^\omega(\alpha,\beta)=\delta(\alpha,\beta)|\nu_\alpha|^2$.
But such states do not exhaust all pure quasifree states. By {\bf
Corollary 4.1} of \cite{ZhAThesis} any pure quasifree state is
given by $a^\omega(\alpha,\beta)=-\delta(\alpha,\beta)\circ\tilde
S^{v,u}$ and
$b^\omega(\alpha,\beta)=-\delta(\alpha,\beta)\circ\tilde S^{v,v}$,
where the linear maps $\tilde S^{v,u},\tilde S^{v,v}$ are subject
to certain conditions. In fact, the pure states given by $\delta$
terms are special in that they are homogeneous states, i.e., they
are invariant under the isometry group $G$. More generally, by
{\bf Proposition 4.3} of \cite{ZhAThesis} any homogeneous
quasifree state is given by coefficients $a^\omega$, $b^\omega$
which are of the form
\begin{eqnarray}
a^\omega(\tilde f,\tilde
h)=\mathfrak{a}^\omega\left(\int_\mathbb{R}dr\tilde
f(-k,\lambda,r,s)\tilde h(k,\lambda',r,s')\right),\nonumber\\
b^\omega(\tilde f,\tilde
h)=\mathfrak{b}^\omega\left(\int_\mathbb{R}dr\tilde
f(-k,\lambda,r,s)\tilde
h(k,\lambda',r,s')\right),\label{HomStateModeCoeff}
\end{eqnarray}
with distributions $\mathfrak{a}^\omega(k,\lambda,\lambda',s,s')$
and $\mathfrak{b}^\omega(k,\lambda,\lambda',s,s')$.

Another important notion is the notion of Hadamard states. A
quasifree state $\omega$ is said to be Hadamard if its 2-point
function $\omega_2$ satisfies the microlocal spectral condition
($\mu SC$) (see \cite{Radzikowski1996}). Hadamard states are
believed to be the states of physical importance for several
reasons discussed in the literature
(\cite{Kay1991},\cite{Wald1984},\cite{Fulling1989},\cite{Fewster2003}).
One way of checking whether a given $\omega_2$ satisfies the $\mu
SC$ is to try to compute its wave front set directly
\cite{Radzikowski1996}. The mode decomposition suggests another
way of doing this. By {\bf Proposition 4.6} of \cite{ZhAThesis}
there exists a wide variety of modes $T_\alpha$ such that the
homogeneous pure states given by $\delta$ terms in these modes are
Hadamard. Such modes can be computed easily. For some conceptually
related approaches (however emphasizing somewhat different
aspects), see also
\cite{Lueders_Roberts_1990},\cite{Fewster:2003ey},\cite{Junker2002},\cite{Olbermann2007},\cite{Degner_Verch_2010}.
For convenience we switch to the variable
$$
s(t)=\int_0^t\frac{d\tau}{a^3(\tau)}
$$
as described in \cite{ZhAThesis}, then the mode equations become
$$
\ddot T_\alpha(s)+\Lambda_\alpha(s)T_\alpha(s)=0,
$$
where
$$
\Lambda_\alpha(s)=a^6(t(s))(\frac{\lambda}{a^2(t(s))}+m^2),
$$
and $t(s)$ is the inverse function of $s(t)$. Now choose an
arbitrary non-negative function $\eta\in C^\infty_0[-1,0]$ and let
$\tilde\eta(s)=\int_0^sd\sigma\eta(\sigma)$. Denote
$$
\Lambda_\alpha'(s)=(1-\tilde\eta(s))\Lambda_\alpha(-1)+\tilde\eta(s)\Lambda_\alpha(s),
$$
and let
$$
\ddot T_\alpha'(s)+\Lambda_\alpha'(s)T_\alpha'(s)=0.
$$
Choose the initial data $T_\alpha'(-1)=p(\Lambda_\alpha(-1))$ and
$T_\alpha'(0)=q(\Lambda_\alpha(-1))$ with corresponding functions
$p$, $q$ as before with the additional constraint that
$$
p(\Lambda)-\frac{1}{\sqrt[4]{4\Lambda}}=\mathfrak{o}(\Lambda^{-\infty})\mbox{,
}q(\Lambda)-i\sqrt[4]{\frac{\Lambda}{4}}=\mathfrak{o}(\Lambda^{-\infty}),
$$
where $=\mathfrak{o}(\Lambda^{-\infty})$ means of rapid decay at
$\Lambda\to+\infty$. Then the choice $T_\alpha(0)=T_\alpha'(0)$
and $\dot T_\alpha(0)=\dot T_\alpha'(0)$ yields modes $T_\alpha$
such that the homogeneous pure state given by the $\delta$ term is
Hadamard. This is basically the essence of {\bf Proposition 4.6}.
Some relations of the rapid decay in the Fourier space with the
Hadamard property can be found in \cite{Olbermann2007}. It follows
that if such modes $T_\alpha$ are chosen, then a generic quasifree
state is Hadamard if and only if the remaining symmetric part of
$\omega_2$ given by coefficients $a^\omega$ and $b^\omega$ is
smooth. A general criterion for smoothness of a distribution in
terms of it Fourier transform is unfortunately not known in
harmonic analysis. In principle {\bf Proposition 4.4} of
\cite{ZhAPub12012} could give at least sufficient conditions, but
this is still a work in progress.

\section{Conclusion}

In the introduction we have briefly commented on the relevance of
anisotropic (and, in particular, Bianchi) models in cosmology and
mathematics in general terms, which serves as a motivation for
studying them. Now we want to be more specific in what has been
obtained and what can be done with it. Starting with the interests
of pure mathematics we state that in this paper for the first time
(to our knowledge) harmonic analysis in Bianchi IV-VII groups has
been performed including the construction of the dual spaces, the
Plancherel measure and the automorphism groups with their dual
actions. Furthermore, the spectral theory of any metric Laplace
operator is given explicitly for Bianchi II-VII groups, which is
another new result pertaining to spectral analysis. Coming to
mathematical physics, and to quantum field theory in curved
spacetimes in particular, one is interested in rigorous
mathematical description of quantum effects in the presence of
strong classical gravity, which are inherent to the early epoch of
the universe. In \cite{ZhAPub12012} we have analyzed the most
general setup of a linear field in a cosmological model, and we
refer to there for a discussion of the problems and current state
of art in the field. There we have developed a mode decomposition
technique for arbitrary fields in abstract terms. The results are,
however, rather technical and can hardly be used in numerical
calculations for cosmological purposes. To produce some formulae
which can be used in cosmology (as it is done with easier FRW
models) one needs to provide very explicit implementations of the
mathematical structures involved. This has been done in the
present paper, and although the derivations need not be easily
accessible to non-mathematical public, the final formulas indeed
are. One such end-product is the explicit Plancherel measure
$d\nu(k)$ given by (\ref{dnu_k}) along with (\ref{dotnu_k}) from
Section \ref{Section_Plancherel} and formulae for $\mathfrak{K}$
from Section \ref{Section_Orbits}. Another final result is the
spectrum of the Laplace operator, $\Spec(\Delta)=(-\infty,R_h]$,
where $R_h$ is given by (\ref{R_h}) and (\ref{S_Mh}), and the
system of eigenfunctions given by (\ref{Eigenfunc}). This all is
then used to perform the Fourier transform as in Section
\ref{Section_Fourier} and its inverse transform by
(\ref{Fourier_Inv}). We furthermore applied some of the results
for the scalar Klein-Gordon field on Bianchi I-VII spacetimes with
isotropic expansion rate in Section \ref{Section_KG} to enhance
the understanding of their ease and usefulness. Formula
(\ref{lambda_alpha_t}) gives the evolution of the spectrum of the
Schr\"odinger operator of the field. Then the field is mode
decomposed in (\ref{ModeDecompKG}), and the ensuing mode equations
(generalized harmonic oscillators) are given in (\ref{ModeEqKG})
with (\ref{ModeEqCoeff}). The propagator of the field is given
explicitly in terms of modes in (\ref{PropModeDecomp}). Next the
quantized KG field is considered, and the 2-point function of a
quasifree state is mode decomposed in
(\ref{2PointFuncModeDecomp}). It is shown in
(\ref{HomStateModeCoeff}) how the mode coefficients of a state get
simplified if the state is homogeneous, i.e, is invariant under
the Bianchi group of isometries. These results can be used to
extend several results obtained earlier for the KG field on FRW
spacetimes to Bianchi spacetimes (see references cited in Section
\ref{Section_KG}), including the very interesting calculations of
the cosmological particle creation in states of low energy
\cite{Degner_Verch_2010}.

\ack

The first named author thanks the Max Planck Institute for
Mathematics in the Sciences and the International Max Planck
Research School for hosting and financially supporting his PhD
project which this work is a part of. The authors are further
indebted to Professor Gerald Folland for very helpful remarks and
discussions.

\appendix

\section{Induced representations}

Although the generalities of induced representations can be found
in any standard textbook on group representations, for consistency
we will very briefly give an overview of them in this appendix. We
will follow the {\bf Chapter 6} of \cite{Folland199502} in our
treatment.

The inducing procedure produces unitary representations of a
locally compact topological group $G$ from a unitary
representation of its closed subgroup $H$. In this work we will be
interested only in Lie groups, so that the majority of functional
analytical issues are automatically settled. If $\rho$ is a
unitary representation of $H$, and $\nu$ the unitary
representation of $G$ induced from $H$ (described below), we will
write $\nu=\mbox{\bf Ind}_H^G\rho$. In particular the restriction
of $\nu$ to $H$ is unitarily equivalent to $\rho$,
$\nu|_H\sim\rho$. Unless the homogeneous space $G/H$ is a finite
set, $\nu$ is infinite dimensional.

Denote $M=G/H$. We present the construction of the induced
representation under assumptions which hold in cases of our
interest. Namely, we suppose that there exists a $G$-invariant
measure $dx_M$ on $M$. Any $x\in G$ can be uniquely written as
$x=x_Mx_H$ with $x_M\in M$ and $x_H\in H$. If $\mathcal{H}_\rho$
is the representation Hilbert space of $\rho$, then the
representation Hilbert space $\mathcal{H}_\nu$ of the induced
representation is taken to be
$\mathcal{H}_\nu=L^2(M,\mathcal{H}_\rho,dx_M)$, i.e.,
$\mathcal{H}_\rho$-valued $dx_M$-square integrable functions on
$M$. The action of the representation $\nu$ on $\mathcal{H}_\nu$
is given by
$$
\nu(x)f(y)=\rho((x^{-1}y)_H)^{-1}f((x^{-1}y)_M)\mbox{, }\forall
x\in G\mbox{, }y\in M\mbox{, }f\in\mathcal{H}_\nu.
$$
That this is a natural construction can be seen by the following
nice properties. If $\rho$ and $\rho'$ are unitarily equivalent
unitary representations of $H$, then $\nu=\mbox{\bf Ind}_H^G\rho$
and $\nu'=\mbox{\bf Ind}_H^G\rho'$ are unitarily equivalent
unitary representations of $G$. Moreover, it can be shown that if
$\{\rho_i\}$ is a family of unitary representations of $H$ then
$\mbox{\bf Ind}_H^G\bigoplus\rho_i$ is unitarily equivalent to
$\bigoplus\mbox{\bf Ind}_H^G\rho_i$.

\section{An illustration of co-adjoint orbits}

\includegraphics[scale=0.5]{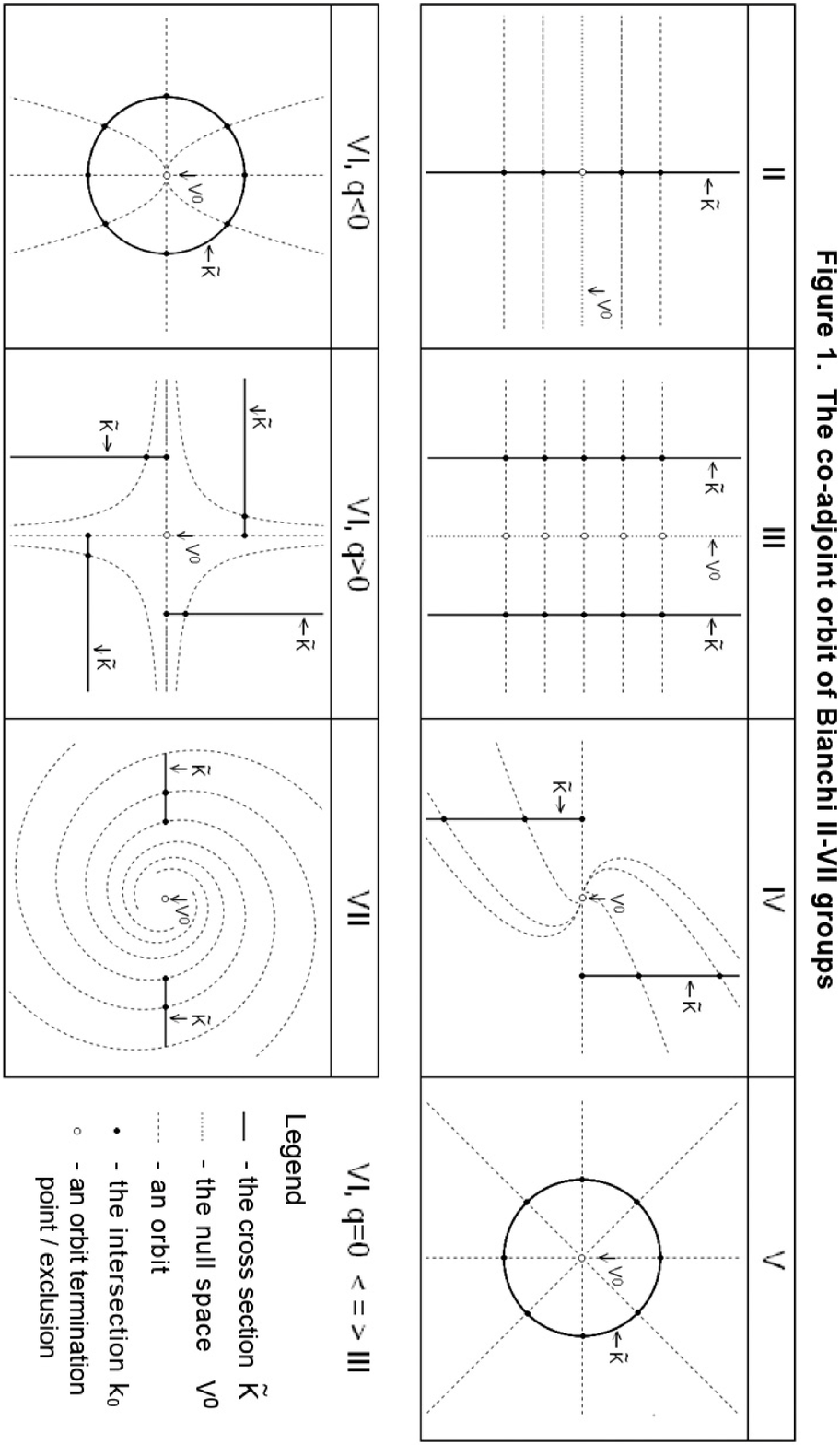}

\section*{Bibliography}

\bibliographystyle{ieeetr}
\bibliography{lib}

\begin{thebibliography}{10}

\bibitem{Jaffe2005}
T.~Jaffe, A.~Banday, H.~Eriksen, K.~G{\'o}rski, and F.~Hansen, ``{Evidence of
  vorticity and shear at large angular scales in the WMAP data: a violation of
  cosmological isotropy?},'' {\em The Astrophysical Journal Letters}, vol.~629,
  no.~1, p.~L1, 2005.

\bibitem{Jaffe2006}
T.~Jaffe, A.~Banday, H.~Eriksen, K.~G\'orski, and F.~Hansen, ``{Bianchi type
  VIIh models and the WMAP 3-year data},'' {\em Astronomy and Astrophysics},
  vol.~460, pp.~393--396, Dec 2006.

\bibitem{Jaffe2006a}
T.~Jaffe, A.~Banday, H.~Eriksen, K.~G{\'o}rski, and F.~Hansen, ``{Fast and
  efficient template fitting of deterministic anisotropic cosmological models
  applied to Wmap data},'' {\em The Astrophysical Journal}, vol.~643, no.~2,
  p.~616, 2006.

\bibitem{Jaffe2006b}
T.~Jaffe, S.~Hervik, A.~Banday, and K.~G{\'o}rski, ``{On the viability of
  Bianchi type VIIh models with dark energy},'' {\em The Astrophysical
  Journal}, vol.~644, no.~2, p.~701, 2006.

\bibitem{Bridges2007}
M.~Bridges, J.~McEwen, A.~Lasenby, and M.~Hobson, ``{Markov chain Monte Carlo
  analysis of Bianchi VIIh models},'' {\em Monthly Notices of the Royal
  Astronomical Society}, vol.~377, no.~4, pp.~1473--1480, 2007.

\bibitem{Dechant2009}
P.-P. Dechant, A.~N. Lasenby, and M.~P. Hobson, ``Anisotropic, nonsingular
  early universe model leading to a realistic cosmology,'' {\em Physical Review
  D}, vol.~79, no.~4, p.~043524, 2009.

\bibitem{Soda2012}
J.~{Soda}, ``{Statistical anisotropy from anisotropic inflation},'' {\em
  Classical and Quantum Gravity}, vol.~29, p.~083001, Apr. 2012.

\bibitem{Pontzen2011}
A.~{Pontzen} and A.~{Challinor}, ``{Linearization of homogeneous,
  nearly-isotropic cosmological models},'' {\em Classical and Quantum Gravity},
  vol.~28, p.~185007, Sept. 2011.

\bibitem{Collaboration2013}
P.~Ade, N.~Aghanim, C.~Armitage-Caplan, M.~Arnaud, M.~Ashdown,
  F.~Atrio-Barandela, J.~Aumont, C.~Baccigalupi, A.~Banday, {\em et~al.},
  ``{Planck 2013 results. XXIII. Isotropy and Statistics of the CMB},'' {\em
  arXiv preprint arXiv:1303.5083}, 2013.

\bibitem{Collaboration2013a}
P.~Ade, N.~Aghanim, C.~Armitage-Caplan, M.~Arnaud, M.~Ashdown,
  F.~Atrio-Barandela, J.~Aumont, C.~Baccigalupi, A.~Banday, {\em et~al.},
  ``{Planck 2013 results. XXVI. Background geometry and topology of the
  Universe},'' {\em arXiv preprint arXiv:1303.5086}, 2013.

\bibitem{Wald1994}
R.~M. Wald, {\em Quantum field theory in curved spacetime and black hole
  thermodynamics}.
\newblock University of Chicago Press, 1994.

\bibitem{Verch2012}
R.~Verch, ``Local covariance, renormalization ambiguity, and local thermal
  equilibrium in cosmology,'' in {\em Quantum Field Theory and Gravity},
  pp.~229--256, Springer, 2012.

\bibitem{Brunetti2005}
R.~Brunetti and K.~Fredenhagen, {\em Algebraic approach to quantum field
  theory}.
\newblock DESY, 2005.

\bibitem{Hollands2010}
S.~Hollands and R.~M. Wald, ``Axiomatic quantum field theory in curved
  spacetime,'' {\em Communications in Mathematical Physics}, vol.~293, no.~1,
  pp.~85--125, 2010.

\bibitem{Petrov59}
K.~V. A.~V. Petrov, A., ``Classification of gravitational fields of general
  form by motion groups,'' {\em Izvestiia Visshikh Uchebnikh Zavedeniy.
  Matematika}, vol.~6(13), pp.~118--130, 59.

\bibitem{StephaniKramerMacCallumHoenselaersHerlt200305}
H.~Stephani, D.~Kramer, M.~MacCallum, C.~Hoenselaers, and E.~Herlt, {\em Exact
  Solutions of Einstein's Field Equations (Cambridge Monographs on Mathematical
  Physics)}.
\newblock Cambridge University Press, 2~ed., May 2003.

\bibitem{pittphilsci1507}
G.~McCabe, ``The structure and interpretation of cosmology.'' Phil. Sci. Arch.,
  December 2003.

\bibitem{Ryan1975}
M.~P. Ryan and L.~C. Shepley, ``Homogeneous relativistic cosmologies,'' {\em
  Research supported by the National Science Foundation and Science Research
  Council of England. Princeton, NJ, Princeton University Press, 1975. 336 p.},
  vol.~1, 1975.

\bibitem{Ellis}
G.~F.~R. Ellis and H.~van Elst, ``{Cosmological models (Cargese lectures
  1998)},'' {\em Theoretical and Observational Cosmology, edited by M.
  Lachieze-Rey}, p.~1.

\bibitem{ZhAPub12012}
Z.~Avetisyan, ``A unified mode decomposition method for physical fields in
  homogeneous cosmology,'' {\em arXiv:1212.2408 [math-ph]}.

\bibitem{Thangavelu199803}
S.~Thangavelu, {\em Harmonic Analysis on the Heisenberg Group (Progress in
  Mathematics)}.
\newblock Birkh\"auser, 1~ed., 3 1998.

\bibitem{Strichartz89}
R.~Strichartz, ``{Harmonic analysis as spectral theory of Laplacians},'' {\em
  J. Func. Anal.}, vol.~87, pp.~51--148, 1989.

\bibitem{Folland199502}
G.~B. Folland, {\em Course in Abstract Harmonic Analysis (Studies in Advanced
  Mathematics)}.
\newblock CRC-Press, 1~ed., February 1995.

\bibitem{Pukanszky64}
L.~Pukanszky, ``{The Plancherel Formula for the Universal Covering Group of
  $SL(\mathbb{R},2)$},'' {\em Math. Annalen}, vol.~156, pp.~96--143, 1964.

\bibitem{Osinovsky1973}
M.~Osinovsky, ``Bianchi universes admitting full groups of motions,'' {\em Ann.
  Inst. Henri Poincar\'e (Section A: Physique th\'eorique)}, vol.~19,
  pp.~197--210, 1973.

\bibitem{Currey2005}
B.~N. Currey, ``Explicit orbital parameters and the plancherel measure for
  exponential lie groups,'' {\em Pacific Journal of Mathematics}, vol.~219,
  pp.~97--138, Mar 2005.

\bibitem{ZhAThesis}
Z.~Avetisyan, {\em Mode Decomposition and Fourier Analysis of Physical Fields
  in Homogeneous Cosmology}.
\newblock PhD thesis, Universit\"at Leipzig, March 2013.

\bibitem{Mubarakzyanov63}
G.~Mubarakzyanov, ``On solvable lie algebras,'' {\em Izvestiia Visshikh
  Uchebnikh Zavedeniy. Matematika}, vol.~1(32), pp.~114--123, 1963.

\bibitem{Hochschild1965}
G.~Hochschild, {\em The structure of Lie groups}.
\newblock Holden-Day San Francisco, 1965.

\bibitem{McCallum1998}
M.~MacCallum, {\em On Einstein's Path: Essays in Honor of Engelbert
  Sch\"ucking}.
\newblock Springer, 1998.

\bibitem{Magnus1954}
W.~Magnus, ``On the exponential solution of differential equations for a linear
  operator,'' {\em Communications on Pure and Applied Mathematics}, vol.~7,
  pp.~649--673, Nov 1954.

\bibitem{Currey1991}
B.~N. Currey, ``An explicit plancherel formula for completely solvable lie
  groups.,'' {\em The Michigan Mathematical Journal}, vol.~38, no.~1,
  pp.~75--87, 1991.

\bibitem{Dieudonne1976}
J.~Dieudonne, {\em Treatise on Analysis Volume II}.
\newblock Academic Press, 1976.

\bibitem{Reed1975}
M.~Reed and B.~Simon, {\em Fourier Analysis, Self-Adjointness (Methods of
  Modern Mathematical Physics, Vol. 2)}.
\newblock Academic Press, 1975.

\bibitem{Milnor1976}
J.~Milnor, ``Curvatures of left invariant metrics on lie groups,'' {\em
  Advances in Mathematics}, vol.~21, pp.~293--329, Sep 1976.

\bibitem{Chernoff1973}
P.~R. Chernoff, ``Essential self-adjointness of powers of generators of
  hyperbolic equations,'' {\em Journal of Functional Analysis}, vol.~12,
  pp.~401--414, Apr 1973.

\bibitem{Donnelly1981}
H.~Donnelly, ``On the essential spectrum of a complete riemannian manifold,''
  {\em Topology}, vol.~20, pp.~1--14, Jan 1981.

\bibitem{Harvey1979}
A.~{Harvey}, ``{Automorphisms of the Bianchi model Lie groups.},'' {\em Journal
  of Mathematical Physics}, vol.~20, pp.~251--253, 1979.

\bibitem{Bourbaki1998}
N.~Bourbaki, {\em Lie Groups and Lie Algebras: Chapters 1-3}.
\newblock Springer, 1998.

\bibitem{Folland1989}
G.~B. Folland, {\em Harmonic Analysis in Phase Space. (AM-122)}.
\newblock Princeton University Press, 1989.

\bibitem{Lukash1976}
V.~Lukash, I.~Novikov, A.~Starobinsky, and Y.~Zeldovich, ``Quantum effects and
  evolution of cosmological models,'' {\em Il Nuovo Cimento B (1971-1996)},
  vol.~35, no.~2, pp.~293--307, 1976.

\bibitem{ZeldovichStarobinsky71}
Y.~B. Zel'dovich and A.~A. Starobinsky, ``Particle creation and vacuum
  polarization in an anisotropic gravitational field,'' {\em Sov. Phys. -
  JETP}, vol.~34, pp.~1159--1166, 1971.

\bibitem{Fulling1974a}
S.~Fulling and L.~Parker, ``Renormalization in the theory of a quantized scalar
  field interacting with a robertson-walker spacetime,'' {\em Annals of
  Physics}, vol.~87, no.~1, pp.~176--204, 1974.

\bibitem{Fulling1974}
S.~Fulling, L.~Parker, and B.~Hu, ``Conformal energy-momentum tensor in curved
  spacetime: Adiabatic regularization and renormalization,'' {\em Physical
  Review D}, vol.~10, no.~12, p.~3905, 1974.

\bibitem{Bernal_Sanchez_2005}
A.~N. Bernal and M.~Sanchez, ``Smoothness of time functions and the metric
  splitting of globally hyperbolic spacetimes,'' {\em Comm. Math. Phys.},
  vol.~257, pp.~43--50, May 2005.

\bibitem{BarGinouxPfaffle200703}
C.~B\"ar, N.~Ginoux, and F.~Pf\"affle, {\em Wave Equations on Lorentzian
  Manifolds and Quantization (ESI Lectures in Mathematics and Physics)}.
\newblock American Mathematical Society, 3 2007.

\bibitem{Lueders_Roberts_1990}
C.~L\"uders and J.~E. Roberts, ``Local quasiequivalence and adiabatic vacuum
  states,'' {\em Comm. Math. Phys.}, vol.~134, pp.~29--63, Nov 1990.

\bibitem{Fulling1989}
S.~A. Fulling, {\em Aspects of Quantum Field Theory in Curved Spacetime (London
  Mathematical Society Student Texts)}.
\newblock Cambridge University Press, 1989.

\bibitem{Parker1969}
L.~{Parker}, ``{Quantized Fields and Particle Creation in Expanding Universes.
  I},'' {\em Phys. Rev.}, vol.~183, pp.~1057--1068, July 1969.

\bibitem{Birrell1984}
N.~D. Birrell and P.~C.~W. Davies, {\em Quantum fields in curved space}.
\newblock No.~7, Cambridge university press, 1984.

\bibitem{Kay1991}
B.~Kay and R.~Wald, ``Theorems on the uniqueness and thermal properties of
  stationary, nonsingular, quasifree states on spacetimes with a bifurcate
  killing horizon,'' {\em Physics Reports}, vol.~207, no.~2, pp.~49--136, 1991.

\bibitem{Araki_Yamagami_1982}
H.~Araki and S.~Yamagami, ``On quasi-equivalence of quasifree states of the
  canonical commutation relations,'' {\em Publications of the Research
  Institute for Mathematical Sciences}, vol.~18, no.~2, pp.~703--758, 1982.

\bibitem{Olbermann2007}
H.~{Olbermann}, ``{States of low energy on Robertson Walker spacetimes},'' {\em
  Classical and Quantum Gravity}, vol.~24, pp.~5011--5030, Oct. 2007.

\bibitem{Degner_Verch_2010}
A.~Degner and R.~Verch, ``Cosmological particle creation in states of low
  energy,'' {\em J. Math. Phys.}, vol.~51, no.~2, p.~022302, 2010.

\bibitem{Radzikowski1996}
M.~J. {Radzikowski}, ``{Micro-local approach to the Hadamard condition in
  quantum field theory on curved space-time},'' {\em Communications in
  Mathematical Physics}, vol.~179, pp.~529--553, Sept. 1996.

\bibitem{Wald1984}
R.~Wald, {\em General relativity}.
\newblock University of Chicago press, 1984.

\bibitem{Fewster2003}
C.~J. Fewster and R.~Verch, ``Stability of quantum systems at three scales:
  Passivity, quantum weak energy inequalities and the microlocal spectrum
  condition,'' {\em Communications in mathematical physics}, vol.~240, no.~1,
  pp.~329--375, 2003.

\bibitem{Fewster:2003ey}
C.~J. Fewster and M.~J. Pfenning, ``{A Quantum weak energy inequality for spin
  one fields in curved space-time},'' {\em J.Math.Phys.}, vol.~44,
  pp.~4480--4513, 2003.

\bibitem{Junker2002}
W.~Junker and E.~Schrohe, ``Adiabatic vacuum states on general spacetime
  manifolds: Definition, construction, and physical properties,'' in {\em
  Annales Henri Poincar{\'e}}, vol.~3, pp.~1113--1181, Springer, 2002.

\end{thebibliography}

\end{document}